\documentclass[a4paper,USenglish]{lipics-v2018}
 \linespread{1.0}

\usepackage{microtype}
\usepackage{multicol}
\usepackage{hyperref}
\hypersetup{
  colorlinks = true,
  linkcolor = blue,
  citecolor = red
}
\usepackage{xstring}
\usepackage[linesnumbered,vlined,algoruled]{algorithm2e}
\SetNoFillComment
\usepackage{amsmath}
\usepackage{amsthm}
\usepackage{graphicx}
\graphicspath{{./pics/}}
\graphicspath{{./pics/}}

\bibliographystyle{plainurl}
\title{Reducing Compare-and-Swap to Consensus Number One Primitives}
\author{Pankaj Khanchandani}{ETH Zurich, Switzerland}{kpankaj@ethz.ch}{}{}
\author{Roger Wattenhofer}{ETH Zurich, Switzerland}{wattenhofer@ethz.ch}{}{}
\authorrunning{P. Khanchandani and R. Wattenhofer}
\Copyright{Pankaj Khanchandani and Roger Wattenhofer}
\subjclass{\ccsdesc[500]{Theory of computation~Concurrent algorithms}}
\keywords{compare-and-swap, synchronization, consensus number}
\nolinenumbers
\theoremstyle{plain} 
\theoremstyle{plain} 
\theoremstyle{plain} 
\theoremstyle{plain} 

\newcommand{\sref}[1]{\StrCut{#1}{:}\pre\post%
                        \IfStrEqCase{\pre}{%
                           {alg}{\hyperref[#1]{Algorithm~\ref*{#1}}}%
                           {lem}{\hyperref[#1]{Lemma~\ref*{#1}}}%
                           {thm}{\hyperref[#1]{Theorem~\ref*{#1}}}%
                           {ln}{\hyperref[#1]{Line~\ref*{#1}}}%
                           {tab}{\hyperref[#1]{Table~\ref*{#1}}}%
                           {cor}{\hyperref[#1]{Corollary~\ref*{#1}}}%
                           {fig}{\hyperref[#1]{Figure~\ref*{#1}}}%
                           {as}{\hyperref[#1]{Assumption~\ref*{#1}}}%
                           {sec}{\hyperref[#1]{Section~\ref*{#1}}}%
                           {def}{\hyperref[#1]{Definition~\ref*{#1}}}%
                           {lp}{\hyperref[#1]{Case~\ref*{#1}}}%
                         }[ss]%
                       }
\newcommand{\cas}[2]{\FuncSty{compare-and-swap(}\ArgSty{#1, #2}\FuncSty{)}}
\newcommand{\rd}{\FuncSty{read()}}
\newcommand{\wrt}[1]{\FuncSty{write(}\ArgSty{#1}{)}}
\newcommand{\hfmx}[1]{\FuncSty{half-max(}\ArgSty{#1}\FuncSty{)}}
\newcommand{\mxwr}[1]{\FuncSty{max-write(}\ArgSty{#1}\FuncSty{)}}
\newcommand{\mt}{\mathit}
\newcommand{\lay}{\underline{\phantom{v}}}

\newcommand{\p}{\,|\,}

\begin{document}

\maketitle

\begin{abstract}
  The consensus number of an object is the maximum number of processes among which binary consensus
  can be solved using any number of instances of the object and read-write registers. Herlihy
  \cite{herlihy:waitfreeSynchronization} showed in his seminal work that if an object has a
  consensus number of \(n\), then there is a universal construction for a \emph{wait-free} and
  \emph{linearizable} implementation of any non-trivial concurrent object or data structure that is
  shared among \(n\) processes. Thus, a synchronization object such as \emph{compare-and-swap} with
  an infinite consensus number and the corresponding  instruction can be viewed as ``strong''. On the
  other hand, a synchronization object such as \emph{fetch-and-add} with consensus number two and
  the corresponding fetch-and-add instruction can be viewed as ``weak''.

  Ellen et al.~\cite{ellen:complexityBasedHierarchy} observed recently that an object supporting \emph{two}
  weak instructions can also achieve infinite consensus number like an  object that supports \emph{one}
  strong instruction. Using Herlihy's universal construction, this implies that ignoring
  concerns about efficiency, one can design any concurrent data structure or algorithm using only
  weak instructions. However, is it possible that a combination of weak instructions is
  really powerful enough to \emph{efficiently} replace a strong instruction, like
  compare-and-swap, without incurring a large overhead in time or space? 

  In this paper, we answer this question by giving an \(O(1)\) time wait-free and linearizable
  implementation of a compare-and-swap register shared among \(n\) processes using read-write
  registers and registers that support two synchronization primitives \emph{half-max} and
  \emph{max-write}, each having consensus number one. The size of the registers required is
  logarithmic in the length of the execution. Thus, any algorithm that solves some arbitrary
  synchronization problem using read-write and compare-and-swap registers can be transformed into an
  algorithm that has the same asymptotic time complexity and uses registers that are logarithmic in
  the length of the execution and only support consensus number one instructions.
\end{abstract}

\section{Introduction}
Any multiprocessor chip needs to support some synchronization instructions, such as compare-and-swap
or fetch-and-add, to coordinate among several concurrent processes that can take steps
asynchronously at different rates. As it is not possible to support every other synchronization
instruction on a multiprocessor, the choice of instructions to support is important. Herlihy
\cite{herlihy:waitfreeSynchronization} gave an elegant way to make such a choice based on
\emph{consensus numbers}. The consensus number of an object is defined as the maximum number of
processes \(n\) among which \emph{binary consensus} can be solved using any number of instances of
the object and read-write registers. In binary consensus, each process is given an input of either
\(0\) and \(1\). Each process must output the same value (\emph{agreement}) within a finite number
of its steps (\emph{termination}) so that the output value is an input value of some process
(\emph{validity}).

Herlihy showed that objects of consensus number \(n\) can be used to construct a \emph{linearizable}
and \emph{wait-free} implementation of any concurrent data structure or object, such as stacks or
queues, shared among \(n\) processes. Linearizability implies that although each operation takes
several steps to complete, it appears to take effect instantaneously at some point between its
invocation and termination. The wait-free property implies that every process completes its
operation within a finite number of its steps irrespective of the speed of other processes.  As the
compare-and-swap object or register has infinite consensus number, supporting compare-and-swap on a
multiprocessor is a good and powerful choice. On the other hand, a fetch-and-increment object or
register has a consensus number of two and is an inherently weak choice by itself.

Recently, Ellen et al.~\cite{ellen:complexityBasedHierarchy} observed that the above classification
of synchronization instructions treats them as individual objects but in reality \emph{all} the
instructions supported by a multiprocessor can be applied on \emph{any} register or memory
location. They also give simple examples where two weak instructions can be combined on the same
object to achieve infinite consensus number. This along with Herlihy's universal construction
implies that it is possible to construct any concurrent data structure or object by only using weak
synchronization instructions. Although possible, such a construction would be inefficient both in
time and space. It is reasonable to argue that the weak instructions are only powerful enough to
solve consensus efficiently but not enough to \emph{efficiently} replace a strong instruction in Herlihy's
hieararchy.  In fact, in a followup work by Gelashvili et
al.~\cite{gelashvili:towardsReducedInstructionSets}, the authors write the following:

``\emph{The practical question is whether we can really replace a compare-and-swap instruction
in concurrent algorithms and data-structures with a combination of weaker instructions.}''

Note that when we refer to the consensus number of a synchronization instruction or a primitive, we
refer to the consensus number of an object that supports two operations: the synchronization
primitive and a read operation. It is essential that we also consider the read operation on the
object, otherwise, arbitrarily powerful primitives that do not return any value would have consensus
number one as there would be no way to read the object (for eg., a compare-and-swap primitive that
does not return a value). Thus, consensus number one primitives are like read-write registers where
the write operation is replaced with another weak ``write-like'' operation. The challenge is to come
up with similar weak operations that can be combined to  efficiently replace compare-and-swap.

In this paper, we show that it is possible to simulate a compare-and-swap register using a
combination of weak instructions and the simulation is efficient both in space and time. Concretely,
we introduce two consensus number one primitives \emph{half-max} and \emph{max-write}. We show that
using read-write registers and registers that support half-max and max-write, we can construct a
linearizable and wait-free implementation of a compare-and-swap register so that every
compare-and-swap operation takes \(O(1)\) time.
The size of the registers required is
logarithmic in the length of the execution. The total number of registers required is \(O(n)\) where
\(n\) is the number of processes. Thus, any \(O(T)\) algorithm using compare-and-swap and read-write
registers can be transformed into an \(O(T)\) time algorithm that only uses 
consensus number one instructions on reasonably large registers. We also outline an extension for
simulating \(m\) compare-and-swap registers where the total number of registers required is
\(O(m+n)\).


\section{Related Work}
One of the most central questions in concurrent computing has been to quantify the power of
synchronization instructions.  Herlihy \cite{herlihy:waitfreeSynchronization} originally defined the
consensus number of an object as the maximum number of processes \(n\) that can solve consensus
using a \emph{single} instance of the object and any number of read-write registers. As a
consequence of this definition, an object that has higher consensus number or is higher in the
Herlihy's hierarchy cannot be implemented using an object that has a lower consensus number or is
lower in the Herlihy's hierarchy. Jayanti \cite{jayanti:onRobustness} defined \emph{robustness} of a
hierarchy as the property that an object at a higher level in the hierarchy cannot be implemented
using \emph{any} number or combination of objects lower in the hierarchy. He gave an example of an
object such that \(k\) instances of the object along with read-write registers can solve consensus
for \(k + 1\) processes. Thus, Herlihy's hierarchy would not be robust if the consensus number
definition is restricted to use only single objects.

A natural fix is to allow any number of instances of the object in the definition of consensus
numbers, which is also the accepted definition and the one that we use
\cite{herlihy:artOfMultiprocessorProgramming}. Under this definition, Chandra et
al.~\cite{chandra:waitfreeVsTresilient} show that Herlihy's hierarchy is robust for two objects out of which
one of is a consensus object and the other one is an arbitrary object. Ruppert
\cite{ruppert:determiningConsensusNumbers} showed that Herlihy's hierarchy is robust for
read-modify-write and readable objects, which captures a large class of synchronization
primitives. All these results assume that when a set of objects are used to implement another
object, the synchronization operations supported by different objects are not merged onto a same
object.

Ellen et al.~\cite{ellen:complexityBasedHierarchy} observed that if one relaxes the above assumption
and does not treat a set of synchronization instructions as a set of individual objects but as a
single object supporting the set of synchronization instructions, then Herlihy's hierarchy is again
not robust. They propose a space based hierarchy in which the power of set of synchronization
instructions is quantified by the minimum amount of space required to solve obstruction free
consensus among \(n\) processes. A small value of this quantity for a set of synchronization
instructions means that the set of instructions is more powerful. This work has led to some more
followup work to understand the power of a set of synchronization instructions from different
perspectives when the instructions are assumed to be supported on the same
register.

In \cite{gelashvili:towardsReducedInstructionSets}, the authors give a lock-free implementation of a
log data structure by only using x86 instructions of consensus number at most two. They report that
the performance achieved was similar to that of a compare-and-swap based implementation. In our
work, we do not restrict ourselves to instructions supported on modern architecture
as our goal is to find if it is
theoretically possible to efficiently compete with a strong instruction
like compare-and-swap using low consensus number instructions only. In
\cite{khanchandani:lowConsensusSynchronization}, we observed that a set of low consensus number
instructions supported on the same register can help to improve the time bound of solving the
fundamental synchronization task of designing a wait-free queue from \(O(n)\) to \(O(\sqrt{n})\) for
\(n\) processes. 

In this paper, we look at the power of a set of low consensus number instructions supported on the
same register with respect to their ability to efficiently simulate a strong instruction like
compare-and-swap. We chose to simulate compare-and-swap not only because of its infinite consensus
number but also because it is ubiquitous and has been shown to yield efficient implementations
\cite{michael:waitfreeLLSC, jayanti:practicalWaitfreeLLSC, jayanti:logQueue}. Our result then
implies that a set of low consensus number instructions can be at least as powerful as
compare-and-swap. In \cite{golab:constantRMRImplementationOfCAS}, the authors give a blocking
implementation of comparison primitives by just using read-write registers and constant number of
remote memory references. Their focus is to use read-write registers and hence wait-freedom is
impossible to achieve. Overall, there is no prior work that shows that a set of low consensus number
instructions can be as powerful and efficient as compare-and-swap registers for an arbitrary
synchronization task.

\section{An Overview of the Method}
Our method is based on the observation that if several compare-and-swap
operations attempt to simultaneously change the value in the register, only one
of them succeeds. So, instead of updating the final value of the register for
each operation, we first determine the single operation that succeeds and update
the final value accordingly. This is achieved by using two consensus number one
primitives: \emph{max-write} and \emph{half-max}.

The max-write primitive
takes two arguments. If the first argument is greater than or equal to the value
in the first half of the register, then the first half of the register is replaced with the
first argument and the second half is replaced with the second argument.
Otherwise, the register is left unchanged. In any case, no value is returned. This primitive helps in
keeping a version number along with a value.

The half-max primitive takes a single argument and replaces the first half of the register with that
argument if the argument is larger. Otherwise, the register remains unchanged. Again, no value is
returned in any case. This primitive is used along with the max-write primitive to determine the
single successful compare-and-swap operation out of several concurrent ones.  The task of
determining the successful compare-and-swap operation can be viewed as a variation of tree-based
combining (as in \cite{ellen:optimalFetchAndIncrement, khanchandani:fastSharedCounting} for
example). The difference is that we do not use a tree as it would incur \(\Theta(\log n)\) time
overhead. Instead, our method does the combining in constant time as we will see later.

In the following section, we formalize the model and the problem. In
\sref{sec:alg}, we give an implementation of the compare-and-swap operation
using registers that support the half-max, max-write, read and write
operations. In \sref{sec:ana}, we prove its correctness and show that the
compare-and-swap operation runs in \(O(1)\) time. In \sref{sec:cn}, we argue that
the consensus numbers of the max-write and half-max primitives are both
one. Finally, we conclude and discuss some extensions in \sref{sec:conc}.

\section{Model}
A \emph{sequential object} is defined by the tuple \((S, O, R, T)\).  Here,
\(S\) is the set of all possible \emph{states} of the object, \(O\) is the set
of \emph{operations} that can be performed on the object, \(R\) is the set of
possible \emph{return values} of all the operations and
\(T: S \times O \to S \times R\) is the \emph{transition function} that
specifies the next state of the object and the return value given a state of the
object and an
operation applied on it.

A \emph{register} is a sequential object and supports the operations \emph{read}, \emph{write},
\emph{half-max} and \emph{max-write}. The read() operation returns
the current value (state) of the register. The write(\(v\)) operation updates
the value of the register to \(v\). The half-max(\(x\)) operation replaces the
value in the first half of the register, say \(a\), with \(\max\{x, a\}\) and does not return any
value. The
max-write(\(x\p y\)) operation replaces the first half of the register, say
\(a\), with \(x\) and second half of the register with \(y\) if and only if
\(x\geq a\). In any case, the operation does not return any value. The register operations are \emph{atomic}, i.e., if different
processes execute them simultaneously, then they execute sequentially in some
order. In general, atomicity is implied whenever we use the word operation in
the rest of the
text.

An \emph{implementation} of a sequential object is a collection of
\emph{functions}, one for each operation defined by the object.  A function
specifies a sequence of \emph{instructions} to be executed when the function is
executed.  An instruction is an operation on a register or a computation on
local variables, i.e., variables exclusive to a process.

A \emph{process} defines a sequence of instructions to be executed depending on the functions it
executes. The processes have identifiers \(1, 2, \ldots, n\).  When a process executes a function,
it is said to \emph{call} that function.  A \emph{schedule} is a sequence of process identifiers.
Given a schedule \(S\), an \emph{execution} \(E(S)\) is the sequence of instructions obtained by
replacing each process identifier in the schedule with the next instruction to be executed by the
corresponding process.

Given an execution and a function called by a process, the \emph{start} of the
function call is the point in the execution when the first register operation of
the function call appears. Similarly, the \emph{end} of the function call is the
point in the execution when the last register operation of the function call
appears.  A function call \(A\) is said to occur \emph{before} another function
call \(B\), if the call \(A\) ends before call \(B\) starts. Thus, the function
calls of an implementation of an object \(O\) form a partial order \(P_O(E)\)
with respect to an execution \(E\).  An implementation on an object \(O\) is
\emph{linearizable} if there is a total order \(T_O(E)\) that extends the
partial order \(P_O(E)\) for any given execution \(E\) so that the actual return
value of every function call in the order \(T_O(E)\) is same as the return
value determined by applying the specification of the object to the order
\(T_O(E)\). The total order \(T_O(E)\) is usually defined by associating a
\emph{linearization point} with each function call, which is a specific point in
the execution when the call takes effect. An implementation is \emph{wait-free}
if every function call returns within a finite number of steps of the calling
process irrespective of the schedule of the other processes.

Our goal is to develop a wait-free and linearizable implementation of the
\emph{compare-and-swap} register. It supports \emph{read} and
\emph{compare-and-swap} operations. The read() operation returns the current
value of the register. The compare-and-swap(\(a\), \(b\)) operation returns true
and updates the value of the register to \(b\) if the value in the register is
\(a\). Otherwise, it returns false and does not change the value.

\section{Algorithm}\label{sec:alg}

\sref{fig:sim} shows the (shared) registers that are used by the algorithm.  There
are arrays \(A\) and \(R\) of size \(n\) each. The \(i^{th}\) entry of the array
\(A\) consists of two \emph{fields}: the field \(c\) keeps a count of the number of
compare-and-swap operations executed by the process \(i\), the field
\(\mt{val}\) is used to store or \emph{announce} the second argument of the
compare-and-swap operation that the process \(i\) is executing. The \(i^{th}\)
entry of the array \(R\) consists of the fields \(c\) and \(\mt{ret}\). The
field \(\mt{ret}\) is used for storing the \emph{return} value of the \(c^{th}\)
compare-and-swap operation executed by the process \(i\). The register \(V\)
stores the current \emph{value} of the compare-and-swap object in the field
\(\mt{val}\) along with its version number in the field \(\mt{seq}\). The fields
\(\mt{seq}\), \(\mt{pid}\) and \(\mt{c}\) of the register \(P\) respectively
store the next version number, the \emph{process identifier} of the process that executed the
latest successful compare-and-swap operation and the count of compare-and-swap
operations issued by that process. For all the registers, the individual fields
are of equal sizes except for the register \(P\). The first half of this
register stores the field \(\mt{seq}\) where as the second half stores the other
two fields, \(\mt{pid}\) and \(c\).
 
\begin{figure}[!htb]
  \centering
  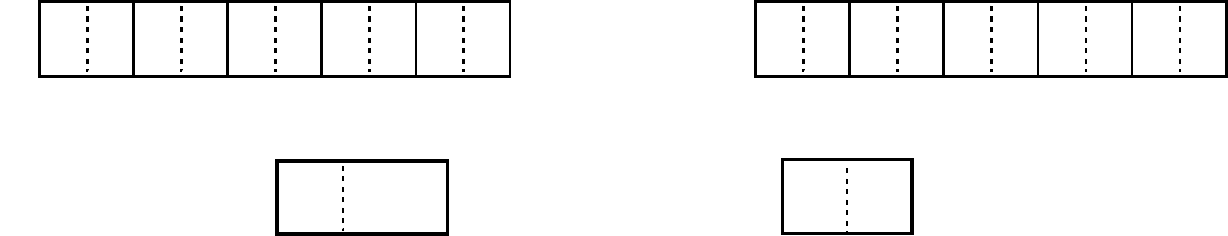
  \caption{An overview of data structures used by \protect\sref{alg:sim}.}
  \label{fig:sim}
\end{figure}

\sref{alg:sim} gives an implementation of the compare-and-swap register.  To
execute the read function, a process simply reads and returns the current value
of the object as stored in the register \(V\) (Lines \ref{ln:rd0} and
\ref{ln:rd1}). To execute the compare-and-swap function, a process starts by
reading the current value of the object (\sref{ln:rdval}). If the first argument
of the function is not equal to the current value, then it returns false (Lines
\ref{ln:neq} and \ref{ln:neqret}). If both the arguments are same as the current
value, then it can simply return true as the new value is same as the initial
one (Lines \ref{ln:eqeq} and \ref{ln:eqeqret}).

Otherwise, the process competes with the other processes executing the compare-and-swap function
concurrently. First, the process increments its local counter (\sref{ln:count}). Then, the new value
to be written by the process is announced in the respective entry of the array \(A\) (\sref{ln:ann})
and the return value of the function is initialized to false by writing to the respective entry in
the array \(R\) (\sref{ln:init}). The process starts competing with the other concurrent processes
by trying to announce its identifier in \(P\) using the max-write operation (\sref{ln:comp}). The
competition is finished by writing a version number larger than used by the competing processes
(\sref{ln:win}).

\begin{algorithm}[!htb]
  \SetKwProg{fn}{ }{ }{ }
  \SetKw{true}{true}
  \SetKw{false}{false}
  \SetKw{and}{and}

  \fn{\rd{}}{
    \((\,\lay{}\p \mt{val})\leftarrow V.\)\rd{}\;\label{ln:rd0}
    \KwRet{\(\mt{val}\)}\;\label{ln:rd1}
  }

  \fn{\cas{a}{b}}{\label{ln:arg}
    
    \((seq \p val) \leftarrow V.\)\rd{}\;\label{ln:rdval}
    \If{\(a \neq val\)}{\label{ln:neq}
      \KwRet{\false}\;\label{ln:neqret}
    }
    \If{\(a = b\)}{\label{ln:eqeq}
      \KwRet{\true}\;\label{ln:eqeqret}
    }
    \(c\leftarrow c + 1\)\;\label{ln:count}
    \(A[\mt{id}].\)\wrt{\(c\p b\)}\;\label{ln:ann}
    \(R[\mt{id}].\)\wrt{\(c \p \false\)}\;\label{ln:init}
    \(P.\)\mxwr{\(\mt{seq} +1 \p \mt{id} \p c\)}\;\label{ln:comp}
    \(P.\)\hfmx{\(\mt{seq} +2\)}\;\label{ln:win}
    \((\mt{seq} \p \mt{pid}\p \mt{cp}) \leftarrow P.\)\rd{}\;\label{ln:prd}
    \((\mt{ca}\p \mt{val}) \leftarrow A[\mt{pid}].\)\rd{}\;\label{ln:ard}
    \If{\(\mt{seq} \text{ is even } \and{}\text{ }\mt{cp} = \mt{ca}\)}{\label{ln:chk}
      \(R[\mt{pid}].\)\mxwr{\(\mt{ca} \p \true\)}\;\label{ln:inf}
      \(V.\)\mxwr{\(\mt{seq} \p \mt{val} \)}\;\label{ln:update}
    }
    \((\,\lay{}\p \mt{ret})\leftarrow R[\mt{id}].\)\rd{}\;\label{ln:ret}
    \KwRet{\(\mt{ret}\)}\;\label{ln:end}
  }
  \caption{The compare-and-swap and the read functions. The symbol \(\p\) is a
    field separator. The symbol \(\,\protect\lay{}\,\) is a variable that is
    not used. The variable \(\mt{id}\)
    is the identifier of the process executing the function. At initialization,
    we have \(c = 0\) and \(V = (0 \p x)\), where \(x\) is the initial value of the
    compare-and-swap object.}
  \label{alg:sim}
  \BlankLine
\end{algorithm}

Once the winner of the competing processes is determined, the winner and the
value announced by it is read (Lines~\ref{ln:prd} and \ref{ln:ard}), the winner
is informed that it won after appropriate checks (Lines~\ref{ln:inf}, \ref{ln:chk})
and the current value is updated (\sref{ln:update}). The value to be returned is
then read from the designated entry of array \(R\) (\sref{ln:ret}). A closer look at the algorithm reveals that the half-max and max-write operations
are only combined on the register \(P\). All other registers either only use max-write (and not
half-max) or are only read-write registers.

In the following section, we analyze \sref{alg:sim} and show that it is a
linearizable and \(O(1)\) time wait-free implementation of the compare-and-swap object.

\section{Analysis}\label{sec:ana}
Let us first define some notation. We refer to a field \(f\) of a register
\(X\) by \(X.f\). The term \(X.f_k^i\) is the value of the field \(X.f\) just
after process \(i\) executes Line~\(k\) during a call. We omit the call
identifier from the notation as it will be always clear from the context. Similarly, \(v_k^i\) is the
value of a variable \(v\), that is local to the process \(i\), just after it
executes Line~\(k\) during a call. The term \(X.f_e\) is the value of a field \(X.f\) at
the end of an execution.

\newcommand{\vseqr}[1]{V.\mt{seq}_{\ref*{ln:rdval}}^{#1}}
\newcommand{\vvalrdval}[1]{V.\mt{val}_{\ref*{ln:rdval}}^{#1}}
\newcommand{\vseqw}[1]{V.\mt{seq}_{\ref*{ln:update}}^{#1}}
\newcommand{\ppidr}[1]{\mt{pid}_{\ref*{ln:prd}}^{#1}}
\newcommand{\pseqr}[1]{\mt{seq}_{\ref*{ln:prd}}^{#1}}
\newcommand{\pseqw}[1]{\mt{seq}_{\ref*{ln:win}}^{#1}}
\newcommand{\vseqe}{V.\mt{seq}_e}
\newcommand{\seqchk}[1]{\mt{seq}_{\ref*{ln:chk}}^{#1}}
\newcommand{\vseqprd}[1]{V.\mt{seq}_{\ref*{ln:prd}}^{#1}}
\newcommand{\pidprd}[1]{\mt{pid}_{\ref*{ln:prd}}^{#1}}
\newcommand{\seqprd}[1]{\mt{seq}_{\ref*{ln:prd}}^{#1}}
\newcommand{\cprd}[1]{\mt{cp}_{\ref*{ln:prd}}^{#1}}
\newcommand{\card}[1]{\mt{ca}_{\ref*{ln:ard}}^{#1}}
\newcommand{\aarg}[1]{\mt{a}_{\ref*{ln:arg}}^{#1}}
\newcommand{\barg}[1]{\mt{b}_{\ref*{ln:arg}}^{#1}}

To prove that our implementation is linearizable, we first need to define the
linearization points. The linearization point of the compare-and-swap function
executed by a process \(i\) is given by \sref{def:linp}.  There are four main
cases.  If the process returns from \sref{ln:neqret} or \sref{ln:eqeqret}, then
the linearization point is the read operation in \sref{ln:rdval} as such an
operation does not change the value of the object (Cases~\ref{lp:1} and
\ref{lp:2}). Otherwise, we look for the execution of \sref{ln:update} that
wrote the sequence number \(\vseqr{i} + 2\) to the field \(V.\mt{seq}\) for the first
time. This is the linearization point of the process \(i\) if its
compare-and-swap operation was successful as determined by the value of
\(P.pid\) (\sref{lp:3a}). Otherwise, the failed compare-and-swap operations are
linearized just after the successful one (\sref{lp:3b}). The calls that have not
 taken effect are linearized after all other linearization points (\sref{lp:4}).

\begin{definition}\label{def:linp}
  The compare-and-swap call by a process \(i\) is
  linearized as follows.
\begin{enumerate}
\item If \(\vvalrdval{i} \neq a_{\ref*{ln:arg}}^i\), then the linearization
  point is the point
  when \(i\) executes \sref{ln:rdval}.\label{lp:1}
\item If \(\vvalrdval{i} = a_{\ref*{ln:arg}}^i = b_{\ref*{ln:arg}}^i\), then 
  the linearization point is the point when \(i\) executes \sref{ln:rdval}.\label{lp:2}
\item If \(\vvalrdval{i} = a_{\ref*{ln:arg}}^i \neq b_{\ref*{ln:arg}}^i\) and
  \(\vseqe \geq \vseqr{i} + 2 \), then let \(p\) be the point when
  \sref{ln:update} is executed by a process \(j\) so that
  \(\vseqw{j} = \vseqr{i} + 2\) for the first time.\label{lp:3}
  \begin{alphaenumerate}
  \item If \(\ppidr{j} = i\), then the linearization point is \(p\).\label{lp:3a}
  \item If \(\ppidr{j} \neq i\), then the linearization point is just after\label{lp:3b}
    \(p\).
  \end{alphaenumerate}
\item If \(\vvalrdval{i} = a_{\ref*{ln:arg}}^i \neq b_{\ref*{ln:arg}}^i\) and \(\vseqe < \vseqr{i} + 2 \), then the linearization point is at the
    end, after all the other linearization points in some order.\label{lp:4}
\end{enumerate}
\end{definition}

Note that we assume in \sref{lp:3} that if \(\vseqe \geq \vseqr{i} + 2\), then
there is an execution of \sref{ln:update} by a process \(j\) with the value
\(\vseqw{j} = \vseqr{i} + 2\). So, we first show in the following lemmas that
this is indeed true.

\begin{lemma}\label{lem:veven}
  The value of \(V.\mt{seq}\) is always even.
\end{lemma}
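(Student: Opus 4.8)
The plan is to establish this as an invariant of the whole execution by induction on the execution prefix. The base case is immediate: initialization sets \(V = (0 \p x)\) as noted in the caption of \sref{alg:sim}, so \(V.\mt{seq}\) starts at \(0\), which is even.

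For the inductive step, I would first scan \sref{alg:sim} to locate every operation that can modify the register \(V\). The read function only reads \(V\) (\sref{ln:rd0}), and inside the compare-and-swap function the only access to \(V\) besides the read at \sref{ln:rdval} is the max-write at \sref{ln:update}. Hence \sref{ln:update} is the unique instruction that can change \(V\). By the semantics of max-write, such an operation either leaves \(V\) untouched or overwrites the first half \(V.\mt{seq}\) with its first argument, which at \sref{ln:update} is the local variable \(\mt{seq}\).

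The key observation is that \sref{ln:update} is guarded: it lies inside the conditional block of \sref{ln:chk}, whose condition requires \(\mt{seq}\) to be even. Since \(\mt{seq}\) is local to the executing process and is not reassigned between \sref{ln:chk} and \sref{ln:update}, the value written at \sref{ln:update} is exactly the even value tested at \sref{ln:chk}. Therefore, whenever a max-write at \sref{ln:update} actually changes \(V.\mt{seq}\) it installs an even value, and whenever it does not change \(V\) the field stays even by the induction hypothesis. Atomicity of max-write means this holds even under concurrent executions of \sref{ln:update}, since each such operation is itself atomic and writes either an even \(\mt{seq}\) or nothing. This completes the induction and shows \(V.\mt{seq}\) is always even.

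There is essentially no hard part here; the argument is a routine invariant-preservation check. The only two points that must be verified carefully are that no other line writes to \(V\), and that \(\mt{seq}\) is not modified between the evenness test at \sref{ln:chk} and its use as the first argument of the max-write at \sref{ln:update} — both of which are immediate from inspecting the pseudocode.
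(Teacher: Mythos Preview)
Your proof is correct and follows essentially the same approach as the paper: the paper's own proof simply notes that \(V.\mt{seq}=0\) at initialization and that the only modification occurs at \sref{ln:update} with an even value. You give more detail (spelling out the induction and the role of the guard at \sref{ln:chk}), but the underlying argument is identical.
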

\begin{proof}
  We have \(V.\mt{seq} = 0\) at initialization. The modification only happens in
  \sref{ln:update} with an even value.
\end{proof}

\begin{lemma}\label{lem:vinc2}
  Whenever \(V.\mt{seq}\) changes, it increases by \(2\).
\end{lemma}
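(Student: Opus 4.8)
The plan is to reduce the statement to a single invariant, $P.\mt{seq} \le V.\mt{seq} + 2$, holding at every point of every execution, and then to pin the value written to $V.\mt{seq}$ to exactly two more than the value it overwrites. First I would record two monotonicity facts that are immediate from the primitive semantics: both max-write and half-max overwrite the first half of a register with the maximum of its old contents and the new argument, so $V.\mt{seq}$ and $P.\mt{seq}$ are each non-decreasing throughout an execution. In particular, if a process reads $V.\mt{seq} = s$ on \sref{ln:rdval}, then at every later point of that same call the current value of $V.\mt{seq}$ is at least $s$.

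Next I would prove the invariant $P.\mt{seq} \le V.\mt{seq} + 2$ by induction over the steps that can modify it. The only writes to the first half of $P$ are the max-write on \sref{ln:comp}, which pushes $P.\mt{seq}$ toward $s+1$, and the half-max on \sref{ln:win}, which pushes it toward $s+2$, where in both cases $s$ is the value the executing process read from $V.\mt{seq}$ on \sref{ln:rdval}. By the monotonicity fact, at the moment either step runs the live value of $V.\mt{seq}$ is at least $s$, so the new argument is at most $V.\mt{seq}+2$; combined with the inductive hypothesis that the old $P.\mt{seq}$ is at most $V.\mt{seq}+2$ and the fact that both primitives take a maximum, the updated $P.\mt{seq}$ is again at most $V.\mt{seq}+2$. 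Finally, every increase of $V.\mt{seq}$ only enlarges the right-hand side, so the invariant is also preserved by the writes on \sref{ln:update}.

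With this in place the lemma is short. The field $V.\mt{seq}$ changes only on \sref{ln:update}, where a process $j$ performs a max-write to $V$ whose first argument $q$ equals the value of $P.\mt{seq}$ it read on \sref{ln:prd}; the test on \sref{ln:chk} guarantees $q$ is even, and by \sref{lem:veven} the value it overwrites is even as well. Suppose this step changes $V.\mt{seq}$ from $s_{\mathrm{old}}$ to $q$. On one hand, max-write alters the first half only when $q \ge s_{\mathrm{old}}$, and since an actual change forces $q \ne s_{\mathrm{old}}$ with both even, we get $q \ge s_{\mathrm{old}} + 2$. On the other hand, letting $v_1$ denote the value of $V.\mt{seq}$ at the instant $j$ read $q$ on \sref{ln:prd}, the invariant gives $q \le v_1 + 2$, while monotonicity gives $v_1 \le s_{\mathrm{old}}$ because that read precedes the write on \sref{ln:update}. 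Hence $q \le s_{\mathrm{old}} + 2$, and the two bounds force $q = s_{\mathrm{old}} + 2$, as claimed.

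The main obstacle is the upper bound in the invariant: one must rule out any process driving $P.\mt{seq}$ more than two steps ahead of the live $V.\mt{seq}$. The delicate point is that the value a process writes into $P$ is computed from a possibly stale read of $V.\mt{seq}$ taken back on \sref{ln:rdval}, so the argument relies essentially on the non-decreasing property to certify that this stale value never exceeds the current one. Once the invariant is established, the two-sided estimate on $q$ in the preceding paragraph is purely routine.
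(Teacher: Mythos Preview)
Your argument is correct, but it is organized differently from the paper's. The paper does not set up a global invariant $P.\mt{seq}\le V.\mt{seq}+2$; instead, for the even value $q$ that process $i$ is about to write on \sref{ln:update}, it traces $q$ back to the process $j$ that put $q$ into $P.\mt{seq}$ via the \emph{half-max} on \sref{ln:win}, notes that $j$ read $V.\mt{seq}=q-2$ on \sref{ln:rdval}, and then uses monotonicity of $V.\mt{seq}$ to conclude $V.\mt{seq}\ge q-2$ just before $i$'s write; evenness and the strict increase then force the prior value to be exactly $q-2$. Your approach replaces this witness-tracking by a single invariant maintained inductively over all steps, and then reads off the same upper bound $q\le s_{\mathrm{old}}+2$ at the point of the write. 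The paper's route is shorter and leans on the parity of $q$ to identify the responsible step (\sref{ln:win} rather than \sref{ln:comp}); your route is more systematic and does not require isolating the originating process. Both rely on the same two ingredients, monotonicity of $V.\mt{seq}$ and the evenness of both the old and the new value, and the final squeeze $s_{\mathrm{old}}+2\le q\le s_{\mathrm{old}}+2$ is identical in spirit.
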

\begin{proof}
  Say that the value of the field was changed to \(\vseqw{i}\) when a process
  \(i\) executed \sref{ln:update}. Then, the value \(\seqchk{i}\) is even and so
  is the value \(\vseqw{i}\). Thus, the value \(\vseqw{i}\) was written to
  \(P.\mt{seq}\) by a process \(j\) and that \(\vseqr{j} = \vseqw{i} - 2\). As
  the field \(V.\mt{seq}\) is only modified by a max-write operation so it only
  increases. Thus, we have \(V.\mt{seq} \geq \vseqw{i} - 2\) just before \(i\)
  modifies it. As \(V.\mt{seq}\) is even by \sref{lem:veven} and \(i\) modifies
  it, we have \(V.\mt{seq} = \vseqw{i} - 2\) before the modification. So, the
  value increases by \(2\).
\end{proof}

\begin{lemma}
  The linearization point as given by \sref{def:linp} is well-defined.
\end{lemma}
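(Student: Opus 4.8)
The plan is to verify two things: that the four cases of \sref{def:linp} partition every possible compare-and-swap call, and --- the only part with real content --- that in \sref{lp:3} the point $p$ named in the definition actually exists and is unique, as flagged in the remark preceding the statement.

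The partition is immediate from the Boolean structure of the guards. Either $\vvalrdval{i} \neq a_{\ref*{ln:arg}}^i$ (\sref{lp:1}) or $\vvalrdval{i} = a_{\ref*{ln:arg}}^i$; in the latter sub-case either $a_{\ref*{ln:arg}}^i = b_{\ref*{ln:arg}}^i$ (\sref{lp:2}) or $a_{\ref*{ln:arg}}^i \neq b_{\ref*{ln:arg}}^i$; and when $\vvalrdval{i} = a_{\ref*{ln:arg}}^i \neq b_{\ref*{ln:arg}}^i$ the conditions $\vseqe \geq \vseqr{i} + 2$ (\sref{lp:3}) and $\vseqe < \vseqr{i} + 2$ (\sref{lp:4}) are complementary. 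Hence each call lands in exactly one of the four cases, and within \sref{lp:3} the guards $\ppidr{j} = i$ and $\ppidr{j} \neq i$ split \sref{lp:3a} and \sref{lp:3b} cleanly. So everything reduces to the existence claim for $p$ in \sref{lp:3}.

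For that claim I would argue that $V.\mt{seq}$ sweeps through every even value up to $\vseqe$ without skipping $\vseqr{i} + 2$. At the moment $i$ executes \sref{ln:rdval} the read does not alter $V$, so $V.\mt{seq}$ equals $\vseqr{i}$ at that point; by \sref{lem:veven} this is even, hence $\vseqr{i} + 2$ is an even value as well. After that moment $V.\mt{seq}$ is only ever modified in \sref{ln:update} by a max-write, so it is monotonically non-decreasing, and by \sref{lem:vinc2} each modification raises it by exactly $2$. Therefore, as $V.\mt{seq}$ rises from its value $\vseqr{i}$ at $i$'s read to $\vseqe \geq \vseqr{i} + 2$ at the end, it cannot jump past the intermediate even value $\vseqr{i} + 2$: there is a first execution of \sref{ln:update}, say by a process $j$, after which $V.\mt{seq} = \vseqr{i} + 2$, that is $\vseqw{j} = \vseqr{i} + 2$. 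Taking $p$ to be this first such execution point determines $p$, and with it the process $j$ and the value $\ppidr{j}$, uniquely.

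The main obstacle is precisely this no-skip property, and it is exactly what \sref{lem:vinc2} buys us: without the guarantee that $V.\mt{seq}$ advances in steps of size $2$ rather than larger jumps, the value $\vseqr{i} + 2$ could be bypassed and $p$ would fail to exist even when $\vseqe \geq \vseqr{i} + 2$. I would also note that, since $V.\mt{seq} \leq \vseqr{i} < \vseqr{i} + 2$ throughout the prefix up to $i$'s read, this first occurrence of $\vseqr{i} + 2$ necessarily lies after \sref{ln:rdval}; this is not needed for well-definedness itself but will be convenient for the later linearizability argument.
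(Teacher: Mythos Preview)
Your proof is correct and follows essentially the same route as the paper: both reduce well-definedness to the existence of the point $p$ in \sref{lp:3} and obtain it from \sref{lem:veven} (so $\vseqr{i}$ is even) together with \sref{lem:vinc2} (so $V.\mt{seq}$ increases in steps of $2$ and cannot skip $\vseqr{i}+2$). The additional remarks you make about the case partition, uniqueness of $p$, and its placement after \sref{ln:rdval} are sound but go beyond what the paper's own proof spells out.
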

\begin{proof}
  The linearization point as given by \sref{def:linp} clearly exists for
  all the cases except for \sref{lp:3}. For \sref{lp:3}, we only need to show that if
  \(\vseqe \geq \vseqr{i} + 2\), then there exists an execution of
  \sref{ln:update} by a process \(j\) so that \(\vseqw{j} = \vseqr{i} + 2\). As
  \(\vseqr{i}\) is even by \sref{lem:veven} and the value of \(V.\mt{seq}\) only
  increases in steps of \(2\) by \sref{lem:vinc2}, it follows from
  \(\vseqe \geq \vseqr{i} + 2\) that \(\vseqr{i} + 2\) was written to
  \(V.\mt{seq}\) at some point.
\end{proof}

To show that the implementation is linearizable, we need to prove two main
statements. First, the linearization point is within the start and end of the
corresponding function call. Second, the value returned by a finished call is same as
defined by the sequence of linearization points up to the linearization point of
the call. In the following two lemmas, we show the first of these statements.

\begin{lemma}\label{lem:vendcall}
  If the condition
  \(\vvalrdval{i} = a_{\ref*{ln:arg}}^i \neq b_{\ref*{ln:arg}}^i\) is true for a
  compare-and-swap call by a process \(i\), then the value of \(V.\mt{seq}\) is
  at least \(\vseqr{i} + 2\) at the end of the call.
\end{lemma}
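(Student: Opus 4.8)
The plan is to follow $V.\mt{seq}$ and show it is raised to at least $\vseqr{i}+2$ before the call of $i$ ends. First I would record two facts about $P$. The only operations that touch $P.\mt{seq}$ are the max-write on \sref{ln:comp} and the half-max on \sref{ln:win}, and both can only raise it, so $P.\mt{seq}$ is non-decreasing. Moreover, every value ever written to $P.\mt{seq}$ is either $\vseqr{j}+1$ (placed by a max-write) or $\vseqr{j}+2$ (placed by a half-max) for a call of some process $j$ that read $V.\mt{seq}=\vseqr{j}$ on \sref{ln:rdval} before that write. Since $V.\mt{seq}$ is even by \sref{lem:veven} and non-decreasing, this gives the provenance fact I use repeatedly: whenever $P.\mt{seq}>\vseqr{i}+2$ holds at some time, $V.\mt{seq}\ge\vseqr{i}+2$ holds at that same time.

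Since the hypothesis $\vvalrdval{i}=a_{\ref*{ln:arg}}^i\neq b_{\ref*{ln:arg}}^i$ holds, the call of $i$ reaches Lines~\ref*{ln:count}--\ref*{ln:update}. After $i$ executes the half-max on \sref{ln:win} we have $P.\mt{seq}\ge\vseqr{i}+2$, so the read on \sref{ln:prd} returns a value at least $\vseqr{i}+2$. I would then distinguish three cases. If the test on \sref{ln:chk} succeeds, then $i$ itself executes the max-write on \sref{ln:update} with first argument at least $\vseqr{i}+2$, leaving $V.\mt{seq}\ge\vseqr{i}+2$; monotonicity of $V.\mt{seq}$ keeps it there until the call ends. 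If the value read on \sref{ln:prd} is strictly larger than $\vseqr{i}+2$, the provenance fact already yields $V.\mt{seq}\ge\vseqr{i}+2$ at that point. The only remaining case is that this value equals $\vseqr{i}+2$ (hence is even) while the test fails, which under an even $\mt{seq}$ forces $\mt{cp}\neq\mt{ca}$.

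For this case I would argue by contradiction, assuming $V.\mt{seq}$ never reaches $\vseqr{i}+2$ during the call. As $V.\mt{seq}$ equals $\vseqr{i}$ when $i$ reads it and changes only upward in steps of two, it stays equal to $\vseqr{i}$ throughout a maximal interval $I$ that contains the entire call. The provenance fact then forces $P.\mt{seq}\le\vseqr{i}+2$ on all of $I$, so once $i$'s half-max runs, $P.\mt{seq}$ is frozen at $\vseqr{i}+2$; while it sits there no further max-write (each of odd value at most $\vseqr{i}+1$) takes effect, so the second half of $P$, holding the fields $\mt{pid}$ and $c$, is also frozen at the values $(\mt{pid},\mt{cp})$ that $i$ reads on \sref{ln:prd}. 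Examining the half-max that first lifted $P.\mt{seq}$ to $\vseqr{i}+2$ and the effective max-write immediately preceding it, I would show that this frozen pair was installed by the \sref{ln:comp} of a particular call $Q$, executed by process $\mt{pid}$ with counter value $\mt{cp}$, that read $V.\mt{seq}=\vseqr{i}$ on its own \sref{ln:rdval}. The inequality $\mt{cp}\neq\mt{ca}$, read against the monotone counter $A[\mt{pid}].c$, shows that process $\mt{pid}$ has already completed $Q$ and started a later call, so every step of $Q$---in particular its own \sref{ln:prd}, \sref{ln:chk}, and \sref{ln:update}---precedes $i$'s \sref{ln:ard} and therefore lies inside $I$. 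But on its \sref{ln:prd} the call $Q$ reads the very same frozen $P=(\vseqr{i}+2,\mt{pid},\mt{cp})$, and on its \sref{ln:ard} it reads its own announcement with $A[\mt{pid}].c=\mt{cp}$; hence its test on \sref{ln:chk} succeeds and it executes the max-write on \sref{ln:update} that raises $V.\mt{seq}$ to $\vseqr{i}+2$ inside $I$, contradicting $V.\mt{seq}=\vseqr{i}$ on $I$.

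The main obstacle is this final case. The count mismatch is exactly the signal that the winner recorded in $P$ has already moved on, and the heart of the argument is to convert that signal into the statement that the winner's own earlier call $Q$ passed the identical check and updated $V$ before $i$'s call ended. The two delicate technical points are justifying the ``freeze'' of both halves of $P$ once $P.\mt{seq}=\vseqr{i}+2$, and confirming that the frozen winner's call $Q$ actually read $V.\mt{seq}=\vseqr{i}$ (so that its half-max reached $\vseqr{i}+2$ and its update targeted this version); the latter is where I would use the pairing between the first freezing half-max and the effective max-write immediately preceding it.
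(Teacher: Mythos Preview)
Your argument is correct, but the route you take is noticeably different from the paper's and considerably longer. The paper avoids your three-way case split on $i$'s own execution by instead passing to an extremal process: it lets $S=\{j:\vseqr{j}=\vseqr{i}\}$ and picks the \emph{first} process $k\in S$ to reach \sref{ln:chk}. After the easy case $\seqprd{k}>\vseqr{i}+2$ (your provenance fact), the only case is $\seqprd{k}=\vseqr{i}+2$. The paper then argues that the winner $\pidprd{k}$ recorded in $P$ is itself in $S$, so $\card{k}>\cprd{k}$ would mean that this winning call already finished and in particular already passed \sref{ln:chk} before $k$ did, contradicting the choice of $k$. Hence $\card{k}=\cprd{k}$, the check succeeds, and $k$'s \sref{ln:update} (or an earlier one) pushes $V.\mt{seq}$ to $\vseqr{i}+2$.

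What you do instead is stay with process $i$, and when $i$'s check fails with $\mt{cp}\neq\mt{ca}$ you reconstruct the winning call $Q$ from the frozen $P$, prove $\vseqr{Q}=\vseqr{i}$, and then show that $Q$'s own check must have succeeded because $Q$ reads back the same frozen $P$ and its own announcement. This is essentially the same ``the recorded winner passes the check'' phenomenon, but derived by hand rather than obtained for free from an extremal choice. Your approach buys a more self-contained analysis that never leaves process $i$'s viewpoint and makes the freeze of $P$ and the pairing between the first freezing half-max and the last effective max-write fully explicit; the paper's approach buys brevity, since the extremal pick turns the count mismatch directly into a contradiction without needing to replay $Q$'s check. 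Both rely on the same underlying facts (monotonicity of $P.\mt{seq}$, odd/even parity of the two writes to $P$, and the provenance link back to $V.\mt{seq}$), so neither is more general; the paper's is simply the cleaner packaging.
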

\begin{proof}
  We define a set of processes \(S = \{j: \vseqr{j} = \vseqr{i}\}\). Consider
  the process \(k \in S\) that is the first one to execute \sref{ln:chk}.
  As the first field of \(P.\mt{seq}\) is always modified by a max operation and
  process \(k\) writes \(\vseqr{i} + 2\) to that field, we have \(\seqchk{k} = \pseqr{k} \geq
  \vseqr{i} + 2\). If  \(\pseqr{k} >  \vseqr{i} + 2\),
  then \(\vseqprd{k} \geq \vseqr{i} + 2\) and we are done.

  So, we only need to check the case when \(\pseqr{k} = \vseqr{i} + 2\). As
  \(\vseqr{i}\) is even by \sref{lem:veven}, so is \(\seqchk{k} = \pseqr{k}\).
  Moreover, the process \(\pidprd{k} \in S\) as some process(es) (including
  \(k\)) executed \sref{ln:comp}. As \(A[\pidprd{k}].c\) always increases
  whenever modified (\sref{ln:ann}), we have \(\card{k} \geq \cprd{k}\). But, if
  \(\card{k} > \cprd{k}\), then the process \(\pidprd{k}\) finished even before
  the process \(k\), a contradiction. So, it holds that \(\card{k} = \cprd{k}\)
  and the process \(k\) executes \sref{ln:update}.

  Now, the execution of \sref{ln:update} by the process \(k\) either changes the
  value of \(V.\mt{seq}\) or does not. If it does, then \(\vseqw{k}
  = \vseqr{i} + 2\) and we are done. Otherwise, someone already changed the
  value of \(V.seq\) to at least \(\vseqr{i} + 2\) because of \sref{lem:vinc2}.
\end{proof}

\begin{lemma}\label{lem:dur}
  The linearization point as given by \sref{def:linp} is within the corresponding
  call duration.
\end{lemma}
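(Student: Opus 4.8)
The plan is to verify the containment case by case, following \sref{def:linp}. Cases \sref{lp:1} and \sref{lp:2} are immediate: the linearization point is the execution of \sref{ln:rdval}, which is itself a register operation of the call and hence lies between its start and end. For \sref{lp:4} I would first argue that the call has not yet finished at the end of the execution, so that placing its linearization point at the very end is legitimate. Indeed, the defining condition $\vvalrdval{i} = \aarg{i} \neq \barg{i}$ holds, so \sref{lem:vendcall} guarantees $V.\mt{seq} \geq \vseqr{i} + 2$ at the end of the call; since in this case $\vseqe < \vseqr{i} + 2$ and $V.\mt{seq}$ never decreases, the call cannot have ended, and any point at the end of the execution lies within its still-open duration.

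The substantial case is \sref{lp:3}, where the linearization point is the point $p$ at which some process $j$ first writes $\vseqw{j} = \vseqr{i} + 2$ via \sref{ln:update}. For the lower bound I would use monotonicity: just after $i$ executes \sref{ln:rdval} (the start of the call) the field equals $\vseqr{i}$, and by \sref{lem:vinc2} the field only increases, in steps of two, so the strictly larger value $\vseqr{i} + 2$ is first attained strictly after \sref{ln:rdval}; hence $p$ lies after the start of the call. For the upper bound I would invoke \sref{lem:vendcall} again: at the end of $i$'s call $V.\mt{seq} \geq \vseqr{i} + 2$, while before $p$ the field is strictly smaller by the definition of $p$ together with \sref{lem:veven} and \sref{lem:vinc2}. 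Therefore $p$ occurs at or before the end of the call; and since $p$ is a max-write on $V$ by $j$ whereas the end of $i$'s call is the read of $R[\mt{id}]$ in \sref{ln:ret}, a distinct atomic operation, $p$ in fact occurs strictly before the end. This settles \sref{lp:3a}, where the linearization point is exactly $p$.

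The one point requiring care is \sref{lp:3b}, where the linearization point is placed just after $p$: I must ensure that this perturbed point still precedes the end of the call. This is exactly where the strictness obtained above is needed; because $p$ is strictly before the end-of-call operation \sref{ln:ret}, the slot immediately following $p$ still lies before \sref{ln:ret} completes, so the linearization point remains within the call. I expect this strictness argument for \sref{lp:3b}, resting on the distinctness of the atomic operations at $p$ and at the end of the call, to be the only genuinely delicate step; every other bound follows routinely from monotonicity (\sref{lem:veven}, \sref{lem:vinc2}) and from \sref{lem:vendcall}.
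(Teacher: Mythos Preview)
Your proposal is correct and follows the same route as the paper: Cases~\ref{lp:1} and~\ref{lp:2} are immediate, \sref{lp:4} reduces via \sref{lem:vendcall} to the call being unfinished, and \sref{lp:3} combines monotonicity of $V.\mt{seq}$ with \sref{lem:vendcall} for the upper bound. You are actually more explicit than the paper about the strictness needed so that ``just after $p$'' still lies within the call in \sref{lp:3b}, while the paper is slightly more explicit in separating out the unfinished-call subcase of \sref{lp:3} (where no upper bound on $p$ is required at all).
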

\begin{proof}
  The statement is true for Cases~\ref{lp:1} and \ref{lp:2} as the instruction
  corresponding to the linearization point is executed by the process \(i\)
  itself.
  
  For \sref{lp:3}, we analyze the case of finished and unfinished call
  separately.  Say that the call is unfinished. As
  \(V.\mt{seq}_e \geq \vseqr{i} + 2\) and \(\vseqr{i}\) is the value of
  \(V.\mt{seq}\) at the start of the call, the linearization point as given by
  \sref{def:linp} is after the call starts.  Now, assume that the call is
  finished. We know from \sref{lem:vendcall} that the value of \(V.\mt{seq}\) is
  at least \(\vseqr{i} + 2\) when the call ends. So, the point when
  \sref{ln:update} writes \(\vseqr{i} + 2\) to \(V.\mt{seq}\) is within the call
  duration.

  We know from \sref{lem:vendcall} that if the call finishes, then we have
  \(V.\mt{seq}_e \geq \vseqr{i} + 2\). So, if \(V.\mt{seq}_e < \vseqr{i} + 2\),
  then the call is unfinished and it is fine to linearize it at the end as done
  for \sref{lp:4}.
\end{proof}

Now, we need to show that the value returned by the calls is same as the value
determined by the order of linearization points. We show this in the following
lemmas.

\newcommand{\lpk}{\(\mt{LP}_k\)}
\newcommand{\lpkd}{\(\mt{LP}_{k'}\)}
\newcommand{\vseqk}[1]{V.\mt{seq}_{#1}}
\newcommand{\vvalk}[1]{V.\mt{val}_{#1}}
\newcommand{\vseqkd}{\(V.\mt{seq}_{k'}\)}
\newcommand{\vvalkd}{\(V.\mt{val}_{k'}\)}
\newcommand{\vseq}{\(V.\mt{seq}\)}
\newcommand{\vval}{\(V.\mt{val}\)}
\newcommand{\valard}[1]{\mt{val}_{\ref*{ln:ard}}^{#1}}

\begin{lemma}\label{lem:ppideqifseqeq}
  Assume that \(x = \seqprd{i} = \seqprd{j}\) for two distinct processes \(i\)
  and \(j\) and that \(x\) is even. Then, it implies that \(\pidprd{i} =
  \pidprd{j}\) and \(\cprd{i} = \cprd{j}\).
\end{lemma}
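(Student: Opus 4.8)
The plan is to show that the second half of the register $P$ (the fields $\mt{pid}$ and $c$) is uniquely determined by the first half $P.\mt{seq}$ whenever the latter is even, so that any two processes reading the same even $\mt{seq}$ value at \sref{ln:prd} necessarily read identical $\mt{pid}$ and $c$ fields. The whole argument rests on tracking how $P$ can be modified: the second half is touched only by the max-write of \sref{ln:comp}, which simultaneously installs an \emph{odd} value in $P.\mt{seq}$, whereas the half-max of \sref{ln:win} raises $P.\mt{seq}$ to an \emph{even} value and never disturbs the second half. (A read at \sref{ln:prd} is the only other access to $P$ and changes nothing.)

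First I would record two facts about $P.\mt{seq}$. Since both max-write and half-max only ever raise the first field of $P$, the value $P.\mt{seq}$ is monotonically non-decreasing throughout the execution. Moreover, the local value $\mt{seq}$ used at \sref{ln:comp} is read from $V$ at \sref{ln:rdval} and is therefore even by \sref{lem:veven}; hence the max-write of \sref{ln:comp}, on any invocation that actually modifies $P$, sets $P.\mt{seq}$ to the odd value $\mt{seq}+1$. In particular, every modification of the second half of $P$ coincides with $P.\mt{seq}$ becoming odd.

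Next I would combine these to show that while $P.\mt{seq}$ holds a fixed even value $x$, the pair $(P.\mt{pid}, P.c)$ stays constant. By monotonicity the set of instants at which $P.\mt{seq} = x$ forms a single contiguous interval. Suppose some max-write modified $P$ during this interval; it would set $P.\mt{seq}$ to an odd value $\mt{seq}+1$ with $\mt{seq}+1 \geq x$, and since $x$ is even while $\mt{seq}+1$ is odd, this forces $\mt{seq}+1 \geq x+1 > x$, pushing $P.\mt{seq}$ strictly past $x$ and thus out of the interval. Consequently no modifying max-write can occur inside the interval, and since only max-write touches the second half, the fields $\mt{pid}$ and $c$ do not change while $P.\mt{seq} = x$.

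Finally I would conclude. Both $i$ and $j$ obtain $\mt{seq} = x$ from their atomic reads of $P$ at \sref{ln:prd}, so both reads fall inside the single interval during which $P.\mt{seq} = x$; since the second half of $P$ is constant throughout that interval, the two reads return the same $\mt{pid}$ and $c$, giving $\pidprd{i} = \pidprd{j}$ and $\cprd{i} = \cprd{j}$. I expect the only delicate step to be the parity-plus-monotonicity argument that a modifying max-write inside the interval must strictly increase $P.\mt{seq}$ beyond the even value $x$; everything else is bookkeeping about which register operation can alter which field of $P$.
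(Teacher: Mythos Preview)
Your proposal is correct and follows essentially the same approach as the paper: both use that the second half of $P$ is modified only by the max-write of \sref{ln:comp}, which always installs an odd first field, together with the monotonicity of $P.\mt{seq}$, to conclude that the $(\mt{pid}, c)$ fields cannot change while $P.\mt{seq}$ holds the even value $x$. Your write-up is more explicit about the contiguous-interval structure and the parity-plus-monotonicity step, but the underlying argument is identical.
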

\begin{proof}
  Without loss of generality assume that the process \(i\) executes \sref{ln:prd} before the
  process \(j\) does so. As \(x = \seqprd{i} = \seqprd{j}\) by assumption, the
  only way in which the field \(P.\mt{pid}\) can change until the process \(j\)
  executes \sref{ln:prd}, is by a max-write operation on \(P\) with the value
  \(x\) as the first field. This is not possible as \(x\) is even and the
  max-write on \(P\) is only executed with odd value as the first field
  (\sref{ln:comp}). So, it holds that \(\pidprd{i} = \pidprd{j}\). Similarly, we
  have \(\cprd{i} = \cprd{j}\).
\end{proof}

\begin{lemma}\label{lem:vseqeqthenvvaleq}
  As long as the value of \vseq{} remains same, the value of \vval{} does not change.
\end{lemma}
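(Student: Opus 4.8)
The plan is to exploit that the register $V$ is modified only in \sref{ln:update}, and there only by a max-write. By the semantics of max-write, both halves of $V$ are overwritten simultaneously or not at all, and the operation succeeds precisely when its first (seq) argument is at least the current value of \vseq{}. Since \vseq{} is touched only by the max operation on its first field, it never decreases. Hence, while \vseq{} holds a fixed value $x$, any successful write to $V$ has seq argument $\geq x$; an argument strictly greater than $x$ would change \vseq{}, so the only writes that can alter \vval{} without altering \vseq{} are those with seq argument exactly $x$. Note that the very write that first made \vseq{}$\,=x$ also had seq argument $x$ (by \sref{lem:vinc2} it raised the field from $x-2$ to $x$). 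Therefore it suffices to prove that all executions of \sref{ln:update} whose seq argument equals a common value $x$ also carry the same val argument.

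First I would fix two processes $i$ and $j$ that execute \sref{ln:update} with $\seqprd{i} = \seqprd{j} = x$. Each passed the test in \sref{ln:chk}, so $x$ is even and $\card{i} = \cprd{i}$, $\card{j} = \cprd{j}$. The evenness of $x$ is exactly the hypothesis of \sref{lem:ppideqifseqeq}, which then gives $\pidprd{i} = \pidprd{j}$ and $\cprd{i} = \cprd{j}$; write $p$ for this common process identifier. Combining with the \sref{ln:chk} equalities yields $\card{i} = \cprd{i} = \cprd{j} = \card{j}$, so both $i$ and $j$ read the same count from $A[p]$ in \sref{ln:ard}.

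The remaining step converts agreement on the count read from $A[p]$ into agreement on the value read from $A[p]$. For this I would invoke the monotonicity of the count field: $A[p]$ is written only in \sref{ln:ann}, atomically as the pair $(c \p b)$, and process $p$ strictly increments $c$ in \sref{ln:count} before each such announcement. Consequently $A[p].c$ is strictly increasing, so each count value is written to $A[p]$ at most once, paired with a uniquely determined val; the count thus acts as a unique timestamp for the announced value. Since $i$ and $j$ read the same count from $A[p]$, they must read the same val, i.e., $\valard{i} = \valard{j}$. Hence both processes supply identical arguments to \sref{ln:update}, and \vval{} cannot change while \vseq{} remains equal to $x$.

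The main obstacle is the step linking the two shared locations $P$ and $A[p]$: I must argue that equal reads of a count field force equal reads of the datum tagged by that count. The clean device, used twice, is that each count is monotone and is written atomically alongside its datum, so the count uniquely identifies the datum. \sref{lem:ppideqifseqeq} already packages this reasoning for $P$; the part that needs care is re-running it for $A[p]$ and checking that the equality test in \sref{ln:chk} correctly ties the count $\cprd{}$ read from $P$ to the count $\card{}$ read from $A[p]$, which is what lets the two timestamp arguments chain together.
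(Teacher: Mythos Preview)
Your proposal is correct and follows essentially the same approach as the paper: reduce to showing that any two executions of \sref{ln:update} with the same (even) first argument $x$ carry the same second argument, then use \sref{lem:ppideqifseqeq} to equate $\mathit{pid}$ and $\mathit{cp}$, the \sref{ln:chk} test to equate $\mathit{ca}$, and the per-process monotonicity of the announcement counter in \sref{ln:ann} to equate $\mathit{val}$. Your first paragraph makes the reduction step more explicit than the paper does, but the core argument is identical.
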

\begin{proof}
  Say that a process \(i\) is the first one to write a value \(x\) to
  \vseq{}. The value written to the field \vval{} by the process \(i\) is
  \(\valard{i}\).  To have a different value of \vval{} with \(x\)
  as the value of \vseq{}, another process \(j\) must execute \sref{ln:update} with
  \(\seqprd{j} = x\) but
  \(\valard{i} \neq \valard{j}\). As \(\seqprd{j} = x = \seqprd{i}\),
  it follows from \sref{lem:ppideqifseqeq} that \(\pidprd{j} = \pidprd{i}\) and
  \(\cprd{i} = \cprd{j}\).
  As the condition in \sref{ln:chk} is true for  both the processes \(i\) and
  \(j\), it then follows that \(\card{i} = \card{j}\).
  As the field \(A[\pidprd{j}].\mt{val}\) is updated only once for a given value
  of \(A[\pidprd{j}].c\) (\sref{ln:ann}), it holds
  that \(\valard{i} = \valard{j}\) and the claim follows.
\end{proof}

\begin{lemma}\label{lem:rlookup}
  Say that \(\seqprd{i} = x\) is even and \(\pidprd{i} = j\) during a call by a process
  \(i\), then \(\vseqr{j} = x - 2\) for some call by process \(j\).
\end{lemma}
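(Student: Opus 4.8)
The plan is to follow the evolution of the fields \(P.\mt{seq}\) and \(P.\mt{pid}\), using the parity gap between the two operations that touch them. The structural facts I would record first are: \(P.\mt{pid}\) is written only by the max-write of \sref{ln:comp}, which at the same time sets \(P.\mt{seq}\) to \(\vseqr{\cdot}+1\), an \emph{odd} number (odd because \(\vseqr{\cdot}\) is even by \sref{lem:veven}); the half-max of \sref{ln:win} raises \(P.\mt{seq}\) to an \emph{even} number and never touches \(P.\mt{pid}\); and since both operations act as a maximum on \(P.\mt{seq}\), that field is nondecreasing throughout the execution.

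Fix the read of \sref{ln:prd} by process \(i\) with \(\seqprd{i}=x\) even and \(\pidprd{i}=j\). Since \(P.\mt{pid}=j\) at that instant and only a max-write can write \(P.\mt{pid}\), the last write to \(P.\mt{pid}\) before \(i\)'s read is a max-write by \(j\) in some call; call the value it read from \(V\) at \sref{ln:rdval} by \(\vseqr{j}\), so that this max-write sets \(P.\mt{seq}=\vseqr{j}+1\). The goal is to prove \(\vseqr{j}=x-2\) for this call, which is exactly the claim. The lower bound \(x\ge\vseqr{j}+2\) is immediate: after \(j\)'s max-write \(P.\mt{seq}=\vseqr{j}+1\) and the field only grows, so \(x\ge\vseqr{j}+1\); as \(x\) is even this forces \(x\ge\vseqr{j}+2\).

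The substance is the upper bound \(x\le\vseqr{j}+2\). I would argue by contradiction: if \(x>\vseqr{j}+2\), then between \(j\)'s max-write and \(i\)'s read the value \(P.\mt{seq}\) climbs from \(\vseqr{j}+1\) past \(\vseqr{j}+2\), so there is a \emph{first} operation in this interval that pushes \(P.\mt{seq}\) strictly above \(\vseqr{j}+2\), and I case on what it is. If it is a successful max-write, it writes \(P.\mt{pid}\) strictly after \(j\)'s max-write and before \(i\)'s read, contradicting the choice of \(j\)'s max-write as the last such write. Otherwise it is a half-max by some process \(m\) with value \(\vseqr{m}+2>\vseqr{j}+2\), hence \(\vseqr{m}\ge\vseqr{j}+2\), and \(m\) must have performed its own max-write of \sref{ln:comp}, with odd value \(\vseqr{m}+1\ge\vseqr{j}+3\), earlier in the same call. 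Here I split on the position of that max-write: if it occurs after \(j\)'s max-write, then it precedes the first operation to exceed \(\vseqr{j}+2\), so at that moment \(P.\mt{seq}\le\vseqr{j}+2<\vseqr{m}+1\) and the max-write succeeds, once more overwriting \(P.\mt{pid}\) after \(j\)'s max-write; if it occurs before \(j\)'s max-write, then whether it succeeds or fails it leaves \(P.\mt{seq}\ge\vseqr{m}+1\ge\vseqr{j}+3\), so by monotonicity \(P.\mt{seq}>\vseqr{j}+1\) already when \(j\) attempts its max-write, making \(j\)'s max-write fail and contradicting that it set \(P.\mt{pid}=j\). Every branch is contradictory, so \(x\le\vseqr{j}+2\), and with the lower bound \(\vseqr{j}=x-2\).

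I expect the upper bound, and within it the half-max branch, to be the main obstacle. The delicate point is converting a half-max carrying a large argument back into the max-write that must have preceded it in the same call, and then using only monotonicity and the odd/even pattern of \(P.\mt{seq}\) to conclude that this earlier max-write either stole \(P.\mt{pid}\) from \(j\) or blocked \(j\)'s own max-write. Choosing the ``first operation to exceed \(\vseqr{j}+2\)'' is what keeps this regress finite and the case split clean.
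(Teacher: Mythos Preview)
Your argument is correct, but it follows a different route from the paper's. You anchor the proof at \(j\)'s last successful max-write of \sref{ln:comp} and then bound \(x\) from both sides, the upper bound requiring your ``first operation to exceed \(\vseqr{j}+2\)'' case analysis. The paper instead anchors at the \emph{top}: it first identifies a process \(h\) whose half-max at \sref{ln:win} actually wrote the even value \(x\) to \(P.\mt{seq}\), so that \(\vseqr{h}=x-2\); since \(h\) previously executed \sref{ln:comp} with first argument \(x-1\) (necessarily successfully, as \(P.\mt{seq}\le x-1\) up to \(h\)'s half-max), one then observes that every successful max-write between \(h\)'s \sref{ln:comp} and \(i\)'s read must also carry first argument exactly \(x-1\), and the last of these is by \(j\), whence \(\vseqr{j}=x-2\). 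The paper's route is shorter because once \(h\) is fixed the window \([x-1,x]\) for \(P.\mt{seq}\) is already pinned down and no separate upper-bound contradiction is needed; your route is longer but entirely self-contained and makes the monotonicity/parity structure fully explicit, which is pedagogically clearer. Either approach is fine.
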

\begin{proof}
  As \(\seqprd{i} = x\), some process \(h\) modified \(P\) by executing
  \sref{ln:comp} or \sref{ln:win} with \(x\) as the first argument.
  As \(x\) is even and \(\vseqr{h}\) is even by \sref{lem:veven}, the process
  \(h\) modified \(P\) by executing \sref{ln:win}.
  So, it holds that \(\vseqr{h} = x - 2\).
  Also, process \(h\) executed \sref{ln:comp} with \(x-1\) as the first
  field.
  As \(\pidprd{i} = j\), the process \(j\) also executed \sref{ln:comp} with \(x -
  1\) as the first field after the process \(h\) did so.
  So, it holds that \(\vseqr{j} = x - 2\).
\end{proof}

\begin{lemma}\label{lem:vevenisalinp}
  For every even value \(x \in [2, V.\mt{seq}_e]\), there is an execution of
  \sref{ln:update} by a process \(i\) so that \(\seqprd{i} = x\) and the first
  such execution is the linearization point of some call.
\end{lemma}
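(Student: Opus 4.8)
The plan is to pin down, for each even \(x \in [2,\vseqe]\), the execution of \sref{ln:update} named in the statement, and then to recognize it as the point \(p\) singled out in \sref{lp:3} of \sref{def:linp}. By \sref{lem:veven} and \sref{lem:vinc2} the field \(V.\mt{seq}\) passes through \(0,2,4,\dots,\vseqe\) in steps of two, so for \(x \le \vseqe\) there is a first moment at which \(V.\mt{seq}\) attains \(x\); this moment is an execution of \sref{ln:update}, the only line that modifies \(V.\mt{seq}\). Since a max-write raises \(V.\mt{seq}\) from \(x-2\) to \(x\) only when its first argument equals \(x\), the process performing it has read \(x\) into its local \(\mt{seq}\) at \sref{ln:prd}. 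Hence \sref{ln:update} executions with local \(\mt{seq}\) value \(x\) exist; let \(p\) be the first of them, performed by a process \(i\) (so \(\seqprd{i}=x\)).

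First I would verify that \(p\) is exactly the execution that first drives \(V.\mt{seq}\) up to \(x\), which is the point referenced in \sref{lp:3}. At the moment of \(p\) the value \(V.\mt{seq}\) is still below \(x\): otherwise \(V.\mt{seq}\) would have reached \(x\) earlier, and by the step-of-two increase of \sref{lem:vinc2} that earlier arrival would itself be an execution of \sref{ln:update} with first argument \(x\), contradicting the minimality of \(p\). Therefore the max-write at \(p\) succeeds and sets \(V.\mt{seq}=x\) for the first time, as required.

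Next I would exhibit a call linearized at \(p\). Because \(p\) lies inside the block guarded by \sref{ln:chk}, process \(i\) cleared that guard with \(\seqprd{i}=x\) even; set \(j := \pidprd{i}\). Applying \sref{lem:rlookup} yields a call by \(j\) with \(\vseqr{j}=x-2\), and the argument in that lemma shows this call executed \sref{ln:comp}, hence passed the guards of \sref{ln:neq} and \sref{ln:eqeq}, so \(\vvalrdval{j}=\aarg{j}\neq\barg{j}\). For this call of \(j\) the hypotheses of \sref{lp:3} now hold, since also \(\vseqe \ge \vseqr{j}+2 = x\); the \sref{ln:update} named there is the first one raising \(V.\mt{seq}\) to \(\vseqr{j}+2=x\), namely \(p\); and as \(\pidprd{i}=j\), \sref{lp:3a} applies and makes \(p\) the linearization point of this call of \(j\).

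The crux I expect is the identification in the second paragraph, equating \emph{the first \sref{ln:update} with local \(\mt{seq}\) value \(x\)} (what the statement asserts exists) with \emph{the first \sref{ln:update} after which \(V.\mt{seq}=x\)} (what \sref{def:linp} uses to place \(p\)). This relies on the step-of-two monotonicity from \sref{lem:vinc2} and on excluding an earlier but failing max-write whose first argument is \(x\); the rest is bookkeeping routed through \sref{lem:rlookup} and the case hypotheses of \sref{def:linp}.
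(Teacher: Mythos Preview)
Your plan is correct and follows essentially the same route as the paper: define \(p\) as the first execution of \sref{ln:update} with local \(\mathit{seq}\) equal to \(x\), use \sref{lem:rlookup} to locate the call by \(\pidprd{i}\) with \(\vseqr{}=x-2\), and invoke \sref{lp:3a}. Your second paragraph, arguing that this \(p\) is also the first moment \(V.\mathit{seq}\) attains \(x\), is more explicit than the paper (which asserts this without justification), and your check that the call by \(j\) actually passed the guards of Lines~\ref*{ln:neq}--\ref*{ln:eqeq} before citing \sref{lp:3} is likewise a detail the paper leaves implicit.
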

\begin{proof}
  Consider an even value \(x \in [2, \vseqe{}]\). Then, we know from
  \sref{lem:vinc2} that \(x\) is written to \vseq{} by an execution of
  \sref{ln:update}. Let \(p\) be the point of first execution of
  \sref{ln:update} by a process \(j\) so that \(\seqprd{j} = x\). So, it holds
  for the process \(\pidprd{j} = h\) that \(\vseqr{h} = x - 2\) using
  \sref{lem:rlookup}.
  As point \(p\) is the first time when \(x\) is written to the field
  \(V.\mt{seq}\), it holds that \(\vseqw{j} = x\).
  Thus, \(p\) is the linearization point of the process
  \(h\) by \sref{def:linp}.
\end{proof}

\begin{lemma}\label{lem:vvalmodatlp}
  The value \(V.\mt{val}\) is only modified at a \sref{lp:3a} linearization
  point.
\end{lemma}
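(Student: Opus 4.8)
The plan is to pinpoint exactly when $V.\mt{val}$ can change and then match that moment to the linearization scheme of \sref{def:linp}. First I would note that $V.\mt{val}$ is written only in \sref{ln:update}, where a process $j$ performs $V.\mxwr{\seqprd{j} \p \valard{j}}$. By the semantics of max-write, this replaces both fields of $V$ exactly when $\seqprd{j} \geq V.\mt{seq}$ holds at the instant of the operation, and leaves $V$ untouched otherwise.

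Next I would show that $V.\mt{val}$ genuinely changes only when $V.\mt{seq}$ strictly increases. Indeed, if $\seqprd{j} < V.\mt{seq}$ the write fails and $V$ is unchanged; if $\seqprd{j} = V.\mt{seq}$ the write succeeds but leaves $V.\mt{seq}$ fixed, so \sref{lem:vseqeqthenvvaleq} forces $V.\mt{val}$ to retain its value. Hence a real change of $V.\mt{val}$ requires $\seqprd{j} > V.\mt{seq}$, and by \sref{lem:vinc2} this is an increase of exactly $2$. Setting $x = \seqprd{j}$, the point $p$ of this operation is then the \emph{first} time $V.\mt{seq}$ attains the value $x$, because $V.\mt{seq}$ is even by \sref{lem:veven} and climbs monotonically in steps of $2$.

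It remains to classify $p$ as a \sref{lp:3a} linearization point. Let $h = \pidprd{j}$ be the identifier that $j$ read from $P$. Since $x = \seqprd{j}$ is even, \sref{lem:rlookup} produces a call by $h$ with $\vseqr{h} = x - 2$, so $\vseqr{h} + 2 = x$ and this call falls under \sref{lp:3}. Because $p$ is the first execution of \sref{ln:update} that writes $\vseqr{h} + 2$ into $V.\mt{seq}$ and because $\ppidr{j} = h$, \sref{def:linp} designates $p$ itself---rather than the point just after it---as the linearization point of $h$, which is exactly the \sref{lp:3a} case. Therefore every modification of $V.\mt{val}$ coincides with a \sref{lp:3a} linearization point.

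The delicate part is the second step, ruling out a change of $V.\mt{val}$ unaccompanied by a change of $V.\mt{seq}$: the max-write does overwrite $V.\mt{val}$ whenever $\seqprd{j} = V.\mt{seq}$, and only \sref{lem:vseqeqthenvvaleq} guarantees that the overwritten value is identical to the old one. Once that is settled, distinguishing \sref{lp:3a} from \sref{lp:3b} is immediate, since the call being linearized is by construction the one whose identifier $h = \pidprd{j}$ resides in $P$ at the moment of the update.
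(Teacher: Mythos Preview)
Your argument is correct and, if anything, cleaner than the paper's. You proceed in the forward direction: start from an actual modification of $V.\mt{val}$, argue via \sref{lem:vseqeqthenvvaleq} that this forces a strict increase of $V.\mt{seq}$, and then identify the resulting point as a \sref{lp:3a} linearization point via \sref{lem:rlookup}. The paper instead argues the complementary way: it fixes an arbitrary \sref{lp:3a} point $q$ where $V.\mt{seq}$ becomes $x$, locates the preceding point $p$ where $V.\mt{seq}$ first became $x-2$ (which is either initialization or, by \sref{lem:vevenisalinp}, another linearization point), and invokes \sref{lem:vseqeqthenvvaleq} to conclude that $V.\mt{val}$ is unchanged on the interval $(p,q)$. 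Since these intervals tile the execution, the lemma follows. Both routes rest on the same two lemmas; yours is the more direct contrapositive and avoids the implicit tiling argument.

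One small point you skate over: when you say ``this call falls under \sref{lp:3}'', you are using not only $\vseqe \geq \vseqr{h}+2$ but also $\vvalrdval{h} = \aarg{h} \neq \barg{h}$. The latter holds because the call by $h$ that \sref{lem:rlookup} produces is one in which $h$ executed \sref{ln:comp}, hence already passed the tests in Lines~\ref{ln:neq} and~\ref{ln:eqeq}. The paper's proof of \sref{lem:vevenisalinp} glosses over the same detail, so you are in good company, but it is worth one sentence.
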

\begin{proof}
  Let \(q\) be a \sref{lp:3a} linearization point. Say that the value of
  \(V.\mt{seq}\) is updated to \(x\) at \(q\).
  Let \(p\) be the first point in the execution when the value of \(V.\mt{seq}\)
  is \(x - 2\).
  Using \sref{lem:vevenisalinp}, we conclude
  that \(p\) is either a 
  linearization point (for \(x - 2 \geq 2\)) or the initialization point (for \(x
  - 2 = 0\)). 
  Using \sref{lem:vseqeqthenvvaleq}, the value of \(V.\mt{val}\) is not modified
  between \(p\) and \(q\).
\end{proof}

\newcommand{\lval}[1]{L.\mt{val}_{#1}}
\newcommand{\lret}[1]{L.\mt{ret}_{#1}}

We want to use the above lemma in an induction argument on the linearization
points to show that the values returned by the corresponding calls are
correct. First, we introduce some notation for \(k \geq 1\). The term
\(\lval{k}\) is the value of the abstract compare-and-swap object after the
\(k^{th}\) linearization point. The terms \(\vseqk{k}\) and \(\vvalk{k}\),
respectively, are the values of \vseq{} and \vval{} after the \(k^{th}\)
linearization point.  These terms refer to the respective values just
after initialization for \(k = 0\). For \(k \geq 1\), the term \(\lret{k}\) is
the expected return value of the call corresponding to the \(k^{th}\)
linearization point. The following two lemmas prove the correctness using
induction on the linearization points and checking the different linearization
point cases separately.

\newcommand{\tr}{\mt{true}}
\newcommand{\fl}{\mt{false}}

\begin{lemma}\label{lem:sim}
  After \(k \geq 0\) linearization points, we have \(\lval{k} =
  \vvalk{k}\) except for \sref{lp:4} linearization points. For \(k \geq 1\), the
  \(\lret{k}\) values are false for \sref{lp:1},
  true for \sref{lp:2}, true for \sref{lp:3a} and false for \sref{lp:3b}.
\end{lemma}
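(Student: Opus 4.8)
The plan is to prove both assertions at once by induction on $k$, splitting the inductive step according to which case of \sref{def:linp} produced the $k$-th linearization point. The base case $k=0$ is immediate: at initialization $\lval{0}$ is the initial object value $x$ and $\vvalk{0}=x$ because $V=(0\p x)$. Since every \sref{lp:4} point is placed after all the others, the non-\sref{lp:4} points form a prefix of the total order, so the value correspondence only has to be maintained along this prefix, and the \sref{lp:4} points (which correspond to unfinished calls) need neither a return value nor the correspondence. The one structural fact I would invoke everywhere is \sref{lem:vvalmodatlp}: since \vval{} changes only at \sref{lp:3a} points, both \vseq{} and \vval{} stay constant between two consecutive linearization points, so together with the hypothesis $\lval{k-1}=\vvalk{k-1}$ the concrete value stored in \vval{} at the instant of the $k$-th point is exactly $\lval{k-1}$.

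For a \sref{lp:1} or a \sref{lp:2} point the linearization point is the read in \sref{ln:rdval} performed by $i$ itself, so the value it reads is $\vvalrdval{i}=\lval{k-1}$. In \sref{lp:1} we have $\vvalrdval{i}\neq\aarg{i}$, hence $\lval{k-1}\neq\aarg{i}$, the abstract operation fails, and $\lret{k}=\fl$; in \sref{lp:2} we have $\lval{k-1}=\aarg{i}=\barg{i}$, so the abstract operation succeeds but rewrites the same value and $\lret{k}=\tr$. In either case \vval{} is not touched, so $\lval{k}=\lval{k-1}=\vvalk{k-1}=\vvalk{k}$.

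For a \sref{lp:3a} point, let $j$ be the process executing \sref{ln:update} at the point $p$, so $\vseqw{j}=\vseqr{i}+2$ and $\ppidr{j}=i$. Just before $p$ the field \vseq{} equals $\vseqr{i}$ by \sref{lem:vinc2}, and by \sref{lem:vseqeqthenvvaleq} the value held by \vval{} there is the same value $\aarg{i}=\vvalrdval{i}$ that $i$ read at the same sequence number; hence $\lval{k-1}=\vvalk{k-1}=\aarg{i}$, the abstract operation succeeds and $\lret{k}=\tr$. It remains to identify the newly written value: the writer passes the test $\cprd{j}=\card{j}$ of \sref{ln:chk}, and since $A[i].\mt{val}$ is written once per count value in \sref{ln:ann}, the value read at \sref{ln:ard} and stored into \vval{} is exactly the value $\barg{i}$ announced by $i$. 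Thus $\vvalk{k}=\barg{i}=\lval{k}$.

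The delicate case, and the one I expect to be the main obstacle, is \sref{lp:3b}: the concurrent losers, linearized just after the corresponding \sref{lp:3a} point $p$, must be shown to fail. Let $j$ again be the writer at $p$ and $h=\ppidr{j}\neq i$ the recorded winner. By \sref{lem:rlookup} the winner read the sequence number at the same level $\vseqr{h}=\vseqr{i}$, so by \sref{lem:vseqeqthenvvaleq} it read the same old value $\vvalrdval{h}=\vvalrdval{i}=\aarg{i}$; moreover, because $h$ reached \sref{ln:update} its call did not return at \sref{ln:neqret} or \sref{ln:eqeqret}, so $\aarg{h}=\vvalrdval{h}$ and $\aarg{h}\neq\barg{h}$. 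The value installed at $p$ is $\barg{h}$, and by \sref{lem:vvalmodatlp} it is unchanged by the intervening \sref{lp:3b} points, so just before the $k$-th point \vval{} holds $\barg{h}$ and, by the hypothesis, $\lval{k-1}=\barg{h}$. Since $\barg{h}\neq\aarg{h}=\aarg{i}$, the abstract operation fails, $\lret{k}=\fl$, and \vval{} is untouched so $\lval{k}=\vvalk{k}$. This step is subtle precisely because it is where the real compare-and-swap semantics --- only one of several simultaneous updates takes effect and all others must observe the already-changed value --- has to be recovered from the version-number bookkeeping.
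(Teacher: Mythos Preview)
Your proof is correct and follows essentially the same inductive structure and case analysis as the paper's proof, invoking the same supporting lemmas (\sref{lem:vinc2}, \sref{lem:vseqeqthenvvaleq}, \sref{lem:vvalmodatlp}, \sref{lem:rlookup}) at the same places. Two minor imprecisions: in \sref{lp:3a} your justification that $\valard{j}=\barg{i}$ is compressed---the paper makes explicit (via \sref{lem:vendcall}) that $i$'s call is still pending so $A[i]$ has not been overwritten---and in \sref{lp:3b} you write that $h$ ``reached \sref{ln:update}'' when all you actually know (and all you need) is that $h$ reached \sref{ln:comp}, which already suffices to conclude it did not return at \sref{ln:neqret} or \sref{ln:eqeqret}.
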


\begin{proof}
  We prove the claim by induction on \(k\). For the base case of \(k=0\), the
  claim is true as \vval{} is initialized with the initial of the
  compare-and-swap object. Let \lpk{} be the \(k^{th}\) linearization point for
  \(k \geq 1\) and say that it corresponds to a call by a process \(i\). We have the
  following cases.

  \underline{\sref{lp:1}:} Let \lpkd{} be the linearization point previous to
  \lpk{}. By induction hypothesis, it holds that \(\lval{k'} = \vvalk{k'}\). By
  \sref{lem:vvalmodatlp}, the value of \vval{} does not change until \lpk{}. As
  we have a read operation at \lpk{}, it holds that \(\vvalk{k'} =
  \vvalk{k}\). By \sref{def:linp}, we know that \(\vvalk{k} \neq \aarg{i}\). So, it
  holds that \(\lval{k'} = \vvalk{k'} = \vvalk{k} \neq \aarg{i}\). Thus, it follows
  from the specification of the compare-and-swap object that \(\lval{k} =
  \lval{k'} = \vvalk{k}\). Moreover, we have \(\lret{k} = \fl\) as \(\lval{k'}
  = \vvalk{k} \neq \aarg{i}\).

  \underline{\sref{lp:2}:} Again, we let \lpkd{} to be the linearization point
  previous to \lpk{}. As argued in the previous case, it holds that \(\vvalk{k'} =
  \vvalk{k}\). By \sref{def:linp}, we know that \(\vvalk{k} = \aarg{i} = \barg{i}\). So,
  it holds that \(\lval{k'} = \vvalk{k'} = \vvalk{k} = \aarg{i}\). Thus, it follows
  from the object's specification that \(\lval{k} = \barg{i} =
  \vvalk{k}\). Further, we have \(\lret{k} = \tr\) as \(\lval{k'} = \aarg{i}\).

  \underline{\sref{lp:3a}:} Consider the point \lpkd{} when the value
  \(\vseqk{k} - 2\) was written to \vseq{} for the first time. As \(\vseqk{k}\)
  is even by \sref{lem:veven}, it follows from \sref{lem:vevenisalinp} that
  \lpkd{} is a linearization point or the initialization point.  Using
  definition of \sref{lp:3a}, \lpk{} is the first point when the value
  \(\vseqk{k}\) was written to the field \vseq{}. So, we have
  \(\vseqr{i} = \vseqk{k'}\). Thus, it holds that \(\vvalrdval{i} = \vvalk{k'}\)
  by \sref{lem:vseqeqthenvvaleq}.  Therefore, \(\vvalrdval{i} = \lval{k'}\) as
  \(\lval{k'} = \vvalk{k'}\) by induction hypothesis.  Using definition of
  \sref{lp:3a}, it also holds that \(\aarg{i} = \vvalrdval{i}\).  Thus, we have
  \(\aarg{i} = \lval{k'}\) and \(\lval{k} = \barg{i}\).

  Now, assume that the instruction at \lpk{} was executed by a process
  \(j\). Using definition of \sref{lp:3a}, we have \(i = \pidprd{j}\). As \lpk{}
  is the first time when the value of \vseq{} is \(\vseqk{k} = \vseqr{i} + 2\),
  we conclude that the process \(i\) is not finished until \lpk{} by using
  \sref{lem:vendcall}.  As \(\pseqr{j} = \vseqk{k} = \vseqr{i} + 2\), it is true
  that some process \(i'\) has \(\vseqr{i'} = \vseqr{i}\) and that the process
  executed \sref{ln:comp} until \lpk{}. As \(i = \pidprd{j}\), the process
  \(i' = i\). Moreover, the process \(i\) did this during the call corresponding
  to the linearization point \lpk{} as it follows from \sref{lem:vendcall} that
  there is a unique call for any process \(h\) given a fixed value of
  \(\vseqr{h}\). Thus, the process \(i\) already executed \sref{ln:ann} with
  \(\barg{i}\) as the value of the second field. This field has not changed as
  the call by process \(i\) is not finished until \lpk{}. So, we have
  \(\valard{j} = \barg{i}\) and that \(\vvalk{k} = \barg{i}\) as well. Because
  \(\aarg{i} = \lval{k'}\) as shown before, we also have \(\lret{k} = \tr\).

  \newcommand{\lpkdd}{\(\mt{LP}_{k''}\)}
  \newcommand{\vvalr}[1]{V.\mt{val}_{\ref*{ln:rdval}}^{#1}}
  \underline{\sref{lp:3b}:} Let \lpkd{} and \lpkdd{} be the first points when
  the value \(\vseqk{k}\) and \(\vseqk{k} - 2\) is written to \vseq{}
  respectively (\lpkd{} is just before the point \lpk{} as defined by
  \sref{lp:3b}). Let \(i\) and \(j\) be the processes that execute the calls
  corresponding to the points \lpk{} and \lpkd{} respectively. By definition of
  \sref{lp:3b}, we have \(\vseqr{i} = \vseqk{k'} - 2\).  As process \(j\) wrote
  \(\vseqk{k'}\) to \vseq{}, we have \(\vseqr{j} = \vseqk{k'} - 2\) as well. So,
  we have \(\vvalr{i} = \vvalr{j}\) using \sref{lem:vseqeqthenvvaleq}. Using
  definition of \sref{lp:3a} and \sref{lp:3b}, respectively, we have
  \(\aarg{j} = \vvalr{j} \neq \barg{j}\) and \(\aarg{i} = \vvalr{i}\). So, we
  have \(\aarg{i} \neq \barg{j}\).  We have \(\barg{j} = \lval{k'}\) as argued
  in the previous case, so it holds that \(\lval{k} = \lval{k'}\). By induction
  hypothesis, we have \(\lval{k'} =\vvalk{k'}\). Moreover, there no operations
  after \lpkd{} and until \lpk{} by definition of \sref{lp:3b}. So, we have
  \(\vvalk{k'} = \vvalk{k}\) and thus \(\lval{k} = \vvalk{k}\). Also, we have
  \(\lret{k} = \fl\) as \(\aarg{i} \neq \barg{j} = \lval{k'}\).
\end{proof}

\newcommand{\pp}{\,|\,}
\begin{lemma}\label{lem:ret}
  If the \(k^{th}\) linearization point for \(k\geq 1\) corresponds to a
  finished call
  by a process \(i\), then the value returned by the call is \(\lret{k}\).
\end{lemma}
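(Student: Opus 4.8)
The plan is a case analysis following \sref{def:linp}. A finished call returns its value either directly, from \sref{ln:neqret} or \sref{ln:eqeqret}, or as the field $\mt{ret}$ that it reads from $R[i]$ at \sref{ln:ret}. Since \sref{lem:vendcall} shows that a finished call with $\vvalrdval{i} = \aarg{i} \neq \barg{i}$ always satisfies $\vseqe \geq \vseqr{i} + 2$, such a call always falls under \sref{lp:3} and never \sref{lp:4}; hence the only cases to handle for a finished call are \sref{lp:1}, \sref{lp:2}, \sref{lp:3a} and \sref{lp:3b}, and in each I would compare the actual return value against the $\lret{k}$ value fixed in \sref{lem:sim}.

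Cases \sref{lp:1} and \sref{lp:2} are immediate, since the call returns $\fl$ from \sref{ln:neqret} and $\tr$ from \sref{ln:eqeqret}, matching \sref{lem:sim}. For \sref{lp:3a} the return value is $R[i].\mt{ret}$ read at \sref{ln:ret}, so I would show this field has been flipped to $\tr$ beforehand. Let $j$ be the process whose \sref{ln:update} is \lpk{}; by the definition of \sref{lp:3a} we have $\ppidr{j} = i$, and since $j$ passes the guard of \sref{ln:chk} it first executes \sref{ln:inf}, a max-write on $R[i]$ with first field $\card{j}$ and second field $\tr$. Reusing the bookkeeping from the \sref{lp:3a} case of \sref{lem:sim} (where $i$'s relevant call has counter $c$ and is unfinished until \lpk{}), I would argue $\card{j} = c$, which matches the value $c$ that $i$ wrote to $R[i]$ at \sref{ln:init}; hence $\card{j} \geq R[i].c$ and the max-write succeeds, setting $R[i].\mt{ret} = \tr$. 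Finally \sref{ln:inf} by $j$ precedes \lpk{}, which lies inside $i$'s call by \sref{lem:dur}, so it precedes \sref{ln:ret} by $i$; as $R[i].\mt{ret}$ can only move from $\fl$ to $\tr$ within a single call, $i$ reads $\tr$.

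For \sref{lp:3b} the return value is again $R[i].\mt{ret}$ read at \sref{ln:ret}, and I must instead show it stays $\fl$ throughout the call. The field is flipped to $\tr$ only by some process $h$ executing the max-write at \sref{ln:inf} on $R[i]$, which forces $\pidprd{h} = i$; since every relevant read of $A[i]$ precedes the end of $i$'s call, $\card{h} \leq c$ and $R[i].c$ stays equal to $c$, so the guard is met only when $\card{h} = c$, and then $\cprd{h} = \card{h} = c$ by \sref{ln:chk}. By \sref{lem:rlookup} (the read sequence is even) this forces $\seqprd{h} = \vseqr{i} + 2$. But in \sref{lp:3b} the first updater $j$ of $\vseqr{i}+2$ has $\ppidr{j} \neq i$, so by \sref{lem:ppideqifseqeq} every process reading the even value $\vseqr{i}+2$ in $P.\mt{seq}$ reads $P.\mt{pid} = \ppidr{j} \neq i$, contradicting $\pidprd{h} = i$. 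Hence no such $h$ exists, $R[i].\mt{ret}$ remains $\fl$, and $i$ returns $\fl$.

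The main obstacle is the counter bookkeeping in \sref{lp:3b}, namely ruling out a stale or future read of $A[i]$ that would flip $R[i].\mt{ret}$ against an already-advanced $R[i].c$. The key observation is that any \sref{ln:ard} read and \sref{ln:inf} write that can influence $i$'s \sref{ln:ret} read occur before $i$'s call ends, so the observed $A[i].c$ is at most $c$; a strictly smaller count fails the guard against $R[i].c = c$, leaving only the exact match $\card{h} = c$, and that match is precisely what \sref{lem:rlookup} together with \sref{lem:ppideqifseqeq} excludes.
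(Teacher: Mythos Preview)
Your treatment of Cases~\ref{lp:1}, \ref{lp:2} and \ref{lp:3a} is correct and essentially the same as the paper's. For \sref{lp:3b} your route differs from the paper's, and the step where you invoke \sref{lem:rlookup} has a gap.

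The paper handles \sref{lp:3b} by a three-way split on $\seqprd{h}$ versus $\vseqk{k}$ (with $h$ the process attempting \sref{ln:inf} on $R[i]$ after $i$'s \sref{ln:init}). If $\seqprd{h} > \vseqk{k}$, \sref{lem:rlookup} yields a call of $i$ with $\vseqr{i} > \vseqk{k}-2$, i.e.\ a later call, impossible before $i$ finishes the current one. If $\seqprd{h} = \vseqk{k}$, \sref{lem:ppideqifseqeq} together with the definition of \sref{lp:3b} gives $\pidprd{h} \neq i$, a contradiction. If $\seqprd{h} < \vseqk{k}$, \sref{lem:rlookup} places the matching call of $i$ strictly in the past, so $\card{h} < c$ and the max-write at \sref{ln:inf} is a no-op against $R[i].c = c$.

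Your approach instead tries to pin down $\seqprd{h} = \vseqr{i} + 2$ directly from $\cprd{h} = c$, citing \sref{lem:rlookup}. But \sref{lem:rlookup} only asserts that \emph{some} call of $i$ has $\vseqr{i} = \seqprd{h} - 2$; it says nothing about the counter of that call, so it does not identify it with the current call. The missing argument is that the pair $(\pidprd{h},\cprd{h}) = (i,c)$ in $P$ must have been written by $i$'s \sref{ln:comp} in the current call (with first field $\vseqr{i}+1$), and that between that max-write and $h$'s \sref{ln:prd} no \hfmx{z} with $z > \vseqr{i}+2$ can take effect: its issuer would first execute \sref{ln:comp} with a larger first field, which either succeeds (overwriting $(\mt{pid},c)$, contradicting what $h$ reads) or fails (forcing $P.\mt{seq}$ already above $\vseqr{i}+2$, contradicting minimality). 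This bound on $P.\mt{seq}$ is true, but it is a separate inductive argument not contained in \sref{lem:rlookup}. The paper's three-way split sidesteps this entirely by never needing to fix $\seqprd{h}$ exactly.
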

\newcommand{\succret}[1]{\mt{success}_{\ref*{ln:ret}}^{#1}}
\begin{proof}
  Say the \(k^{th}\) linearization point is a \sref{lp:1} point. Using its
  definition, the value returned by the corresponding call is \(\fl\) as the
  condition in \sref{ln:neq} holds true. Using \sref{lem:sim}, we have
  \(\lret{k} = \fl\) as well for \sref{lp:1}. Next, assume that the \(k^{th}\)
  linearization point is a \sref{lp:2} point. Then, the value returned by the
  corresponding call is \(\tr\) as the condition in \sref{ln:eqeq} is true by
  definition. Using \sref{lem:sim}, we have \(\lret{k} = \tr\) as well for
  \sref{lp:2}.

  Now, consider that the \(k^{th}\) linearization point is a \sref{lp:3a}
  point. Say that the process \(j\) executes the operation at the linearization
  point. As \(\pidprd{j} = i\) by definition of \sref{lp:3a}, the process \(i\)
  already executed \sref{ln:comp} with the first field as \(\vseqk{k} - 1\).
  So, the process \(i\) also initialized \(R[i]\) to \((\cprd{j}\pp \fl)\) in
  \sref{ln:init}.  Moreover, the process \(j\) wrote the value
  \((\cprd{j}|\tr)\) to \(R[i]\) afterwards using a max-write operation.  Thus,
  the value of \(R[i].\mt{ret}\) after \lpk{} is \(\tr\). This field is not
  changed by \(i\) until it returns. And, other processes only write \(\tr\) to
  the field. So, the call returns \(\tr\) which is same as the value of
  \(\lret{k}\) given by \sref{lem:sim}.

  Next, consider that the \(k^{th}\) linearization point is a \sref{lp:3b}
  point. Let \(p\) be the point when the process \(i\) initializes \(R[i]\) to a
  value \((x\pp \fl)\) during the call (\sref{ln:init}). Consider a process \(j\) that tries to write \(\tr\) to
  \(R[i].\mt{ret}\) after \(p\) (by executing \sref{ln:inf}). So, it holds that
  \(\pidprd{j} = i\) and that \(\seqprd{j}\) is even. Now, we consider three
  cases depending on the relation between \(\seqprd{j}\) and \(\vseqk{k}\).
  First, consider that \(\seqprd{j} > \vseqk{k}\). As \(\pidprd{j} = i\) and
  \(\seqprd{j}\) is even, we have \(\vseqr{i} = \seqprd{j} - 2\) using
  \sref{lem:rlookup}. So, we have \(\vseqr{i} > \vseqk{k} -2\). This cannot
  happen until \(i\) finishes as \(\vseqr{i} = \vseqk{k} - 2\)
  for the current call by \(i\) using definition of \sref{lp:3b}.
  Second, consider that \(\seqprd{j} = \vseqk{k}\). Using definition of
  \sref{lp:3b}, there is a process \(h\) so that \(\pidprd{h} \neq i\) and
  \(\seqprd{h} = \vseqk{k}\). As \(\seqprd{j} =
  \vseqk{k}\) by assumption, we have \(\pidprd{j} \neq i\) using
  \sref{lem:ppideqifseqeq}. This contradicts our assumption that \(\pidprd{j} =
  i\).
  Third, consider that \(\seqprd{j} < \vseqk{k}\). As \(\pidprd{j} = i\) and
  \(\seqprd{j}\) is even, we have \(\vseqr{i} = \seqprd{j} - 2\) using
  \sref{lem:rlookup}. So, we have \(\vseqr{i} < \vseqk{k} -2\). This corresponds
  to a previous call by the process \(i\) as \(\vseqr{i} = \vseqk{k} - 2\) for
  the current call by \(i\). So, it holds that \(\card{j} < x\) and execution of
  \sref{ln:inf} has no effect. 
  Thus, the process \(i\) returns \(\fl\) for \sref{lp:3b}  which matches
  the \(\lret{k}\) value given by \sref{lem:sim}.

  If the \(k^{th}\) linearization point is a \sref{lp:4} point, then we know
  from \sref{lem:vendcall} that the call is unfinished and we need not consider
  it.
\end{proof}

We can now state the following main theorem about 
\sref{alg:sim}.

\begin{theorem}\label{thm:sim}
  \sref{alg:sim} is a wait-free and linearizable implementation of the compare-and-swap
  register where both the compare-and-swap and read functions take \(O(1)\) time.
\end{theorem}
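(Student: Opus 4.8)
The plan is to establish the three asserted properties in increasing order of difficulty, leaning throughout on the lemmas already proved.

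First I would dispatch the time bound and wait-freedom together, since both follow from a structural observation. A glance at \sref{alg:sim} shows that neither the read function nor the compare-and-swap function contains a loop, an unbounded recursion, or a busy-waiting step; each is a straight-line sequence of register operations and local computations. The read function performs a single register operation (\sref{ln:rd0}), and the compare-and-swap function performs at most a constant number of them (a read of \(V\), the writes to \(A\) and \(R\), the max-write and half-max on \(P\), the reads of \(P\) and \(A\), a conditional max-write to \(R\) and to \(V\), and a final read of \(R\)). Hence every call returns after a fixed number of its own steps regardless of the schedule of the other processes, which gives both wait-freedom and \(O(1)\) time and settles two of the three claims.

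The remaining work is linearizability, and here the strategy is to assemble the lemmas into a single coherent total order. I would take the linearization point of each compare-and-swap call to be the one prescribed by \sref{def:linp}, and the linearization point of each read call to be its lone read of \(V\) on \sref{ln:rd0}. The well-definedness lemma together with \sref{lem:dur} guarantees that every such point lies within the duration of its own call. From this I would argue that the induced order is a genuine total order extending the real-time partial order \(P_O(E)\): if a call \(A\) ends before a call \(B\) starts, then \(A\)'s point (inside \(A\)) precedes \(B\)'s point (inside \(B\)), so real-time precedence is respected; and whenever two linearization points fall at the same instant of the execution, both calls are live there and hence overlap, so any tie-break — including the ``just after \(p\)'' rule of \sref{lp:3b} and the arbitrary ordering of the \sref{lp:4} points at the end — is consistent with \(P_O(E)\).

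Finally I would verify that the return value of every finished call matches the sequential specification applied along this order. For compare-and-swap calls this is exactly the content of \sref{lem:sim} and \sref{lem:ret}: \sref{lem:sim} shows that the abstract value \(\lval{k}\) equals \(\vvalk{k}\) after each linearization point, save the \sref{lp:4} points, which by \sref{lem:vendcall} correspond to unfinished calls and so impose no obligation, while \sref{lem:ret} shows that each finished call returns the value \(\lret{k}\) dictated by the specification. For read calls the returned value is \(V.\mt{val}\) at the read on \sref{ln:rd0}, which by \sref{lem:sim} equals the abstract value at that linearization point. I expect the only genuinely delicate part of the assembly — everything else being bookkeeping on top of the lemmas — to be the consistency of the tie-breaking with real time for the \sref{lp:3b} and \sref{lp:4} points, which is why I would settle the ``overlapping calls at a shared instant'' argument explicitly before invoking the value-correctness lemmas.
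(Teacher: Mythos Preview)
Your proposal is correct and mirrors the paper's proof: \sref{lem:dur} and \sref{lem:ret} for linearizability of compare-and-swap, the loop-free code for wait-freedom and the \(O(1)\) bound, and \sref{ln:rd0} as the read's linearization point. The one place where you are slightly looser than the paper is the read case: \sref{lem:sim} is stated and proved only for compare-and-swap linearization points, so the paper first invokes \sref{lem:vvalmodatlp} to argue that \(V.\mt{val}\) at the read equals \(\vvalk{k'}\) for the preceding compare-and-swap point \(k'\), and only then applies \sref{lem:sim} at \(k'\); your ``by \sref{lem:sim}'' skips this bridging step.
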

\begin{proof}
  We conclude that the compare-and-swap function as given by
  \sref{alg:sim} is linearizable by using \sref{lem:dur} and
  \sref{lem:ret}. The read operation is linearized at the point of execution of
  \sref{ln:rd0}. Clearly, this is within the duration of the call. To check the
  return value, let \lpk{} be the linearization point of the read operation and
  \lpkd{} be the linearization point previous to \lpk{}. Then, we have
  \(\vvalk{k} = \vvalk{k'}\) using \sref{lem:vvalmodatlp}. So, it holds that
  \(\vvalk{k} = \lval{k'}\) using \sref{lem:sim}. Moreover, both the
  compare-and-swap and read functions end after executing \(O(1)\) steps and the
  implementation is wait-free.
\end{proof}

\section{Consensus Numbers}\label{sec:cn}
In this section, we prove that each of the max-write and the half-max primitives has consensus
number one. Note that these are two separate claims. One, that it is impossible to solve consensus
for two processes using read-write registers and registers that support the max-write and read
operation. Second, that it is impossible to solve consensus for two processes using read-write
registers and registers that support the half-max and read operation.  Trivially, both operations
can solve binary consensus for a single process (itself) by just deciding on the input value. To
show that these operations cannot solve consensus for more than one process, we use an
indistinguishability argument.

First, we define some terms. A \emph{configuration} of the system is the value of
the local variables of each process and the value of the shared registers. The
\emph{initial} configuration is the input \(0\) or \(1\) for each process and
the initial values of the shared registers. A configuration is called a bivalent
configuration if there are two possible executions starting from the
configuration so that in one of them all the processes terminate and decide
\(0\) and in the other all the processes terminate and decide \(1\). A
configuration is called \emph{\(0\)-valent} if in all the possible executions
starting from the configuration, the processes terminate and decide
\(0\). Similarly, a configuration is called \emph{\(1\)-valent} if in all the
possible executions starting from the configuration, the processes terminate and
decide \(1\). A configuration is called a univalent configuration if it is
either \(0\)-valent or \(1\)-valent. A bivalent configuration is called
\emph{critical} if the next step by any process changes it to a univalent
configuration.
Consider an initial configuration
in which there is a process \(X\) with the input \(0\) and a process \(Y\) with
the input
\(1\). This configuration is bivalent as \(X\) outputs \(0\) if it is made to run until it
terminates and \(Y\) outputs \(1\) if it is made to run until it terminates.
As the terminating configuration is univalent, a critical configuration is
reached assuming that the processes solve wait-free binary consensus.

Assume that the max-write operation can solve consensus between two processes
\(A\) and \(B\). Then, a critical configuration \(C\) is
reached. W.l.o.g., say that the next step \(s_a\) by the process \(A\) leads
to a \(0\)-valent configuration \(C_0\) and that the next step \(s_b\) by the
process \(B\) leads to a \(1\)-valent configuration \(C_1\). In a simple
notation, \(C_0 = C s_a\) and \(C_1 = C s_b\). We have the following cases.
\begin{enumerate}
\item \(s_a\) and \(s_b\) are operations on different registers: The
  configuration \(C_0 s_b\) is indistinguishable from the configuration \(C_1
  s_a\). Thus, the process \(B\) decides the same value if it runs until
  termination from the configurations \(C_0 s_b\) and \(C_1 s_a\), a
  contradiction. 
\item \(s_a\) and \(s_b\) are operations on the same register and at least one
  of them is a read operation: W.l.o.g., assume that \(s_a\) is a read operation. Then,
  the configuration \(C_0 s_b\) is indistinguishable to \(C_1\) with respect to
  \(B\) as the read operation by \(A\) only changes its local state. Thus, the
  process \(B\) decides the same value if it runs until termination from the
  configurations \(C_0 s_b\) and \(C_1\), a contradiction.
\item \(s_a\) and \(s_b\) are write operations on the same register: Then, the
  configuration \(C_0 s_b\) is indistinguishable from the configuration \(C_1\)
  as \(s_b\) overwrites the value written by \(s_a\). Thus, the process \(B\) will decide
  the same value if it runs until termination from the configurations \(C_0 s_b\) and \(C_1\), a contradiction.
\item \(s_a\) and \(s_b\) are max-write operations on the same register \(R\):
  Say that the arguments of these operations are \(a \p x\) and \(b \p y\) for
  \(A\) and \(B\) respectively. W.l.o.g., assume that \(b \geq a\). Then, there are following two
  cases.
  \begin{enumerate}
  \item Operation \(s_b\) does not modify the register \(R\). Thus, operation \(s_a\) will also
    leave it unchanged as \(b \geq a\). Also, the contents of \(R\) in \(C_1 s_a\) is same as in
    \(C_0\) because \(s_b\) did not modify \(R\) by assumption. So, the configuration \(C_1 s_a\) is
    indistinguishable from the configuration \(C_0\) with respect to \(A\) and it will decide same
    value if run until termination from the two configurations, a contradiction.
  \item Operation \(s_b\) modifies the register \(R\). In this case,    the
  configurations \(C_0 s_b\) is indistinguishable from \(C_1\) as \(b \geq a\) and the operation
  \(s_b\) will overwrite both the fields of the register \(R\). Thus, the process
  \(B\) will decide the same value from these configurations, a contradiction.
  \end{enumerate}
\end{enumerate}
So, the critical configuration cannot be reached and the processes \(A\) and \(B\)
cannot solve consensus using the max-write primitive. Thus, its consensus number is one.

For the half-max primitive, we do a similar case analysis. The first three cases are the same as in
the case of max-write primitive. For
the last case, assume that \(s_a\) and \(s_b\) are half-max operations on the same register
\(R\). Say that the argument of these operations are \(a\) and \(b\) for processes \(A\) and \(B\)
respectively. Assume w.l.o.g. that \(b \geq a\). We have the following two cases as before.
\begin{enumerate}
  \item Say that \(s_b\) does not modify \(R\). In this case, even \(s_a\) does not modify \(R\) as
    \(b \geq a\). Thus, the contents of \(R\) is  same in the configurations \(C_0\) and \(C_1
    s_a\) and these configurations are indistinguishable to \(A\). So, it will decide the same value
    if run until termination from these configurations, a contradiction.
  \item Say that \(s_b\) modifies the register \(R\). In this case, the first half of the register
    \(R\) in  the configurations \(C_0 s_b\) is same as the first half of \(R\) in \(C_1\). This is
    because \(s_b\) overwrites the first half of \(R\) in both the configurations \(C_0\) and
    \(C\). The second half is not modified by either \(s_a\) or \(s_b\) so the contents of \(R\) is
    same in \(C_0 s_b\) and \(C_1\). Therefore, these configurations are indistinguishable with
    respect to \(B\) and it will decide the same value if run until termination from these
    configurations, a contradiction.
\end{enumerate}
So, the critical configuration cannot be reached and the processes \(A\) and \(B\)
cannot solve consensus using the half-max primitive. Thus, its consensus number is one as well.

\section{Conclusion}\label{sec:conc}
The algorithm that we presented simulates  a single compare-and-swap register using \(O(n)\) registers
that support the half-max, max-write, read and write primitives. If \(m\) compare-and-swap registers are to be
simulated, then a straightforward approach requires \(O(mn)\) registers. However, we can improve
this if we observe that there is at most one pending operation per process even if \(m\)
compare-and-swap registers
have to be simulated. The arrays \(A\) and \(R\) store the information about the latest pending
call per process so there is no need to allocate them  for every compare-and-swap register. Only the
registers \(P\) and \(V\) need to be allocated separately. As the counter value \(c\) used in the
first half of each entry of array \(A\) or \(R\) is always increasing, we will be conceptually
running \(m\) instances of the presented algorithm using \(O(m+n)\) registers. Actually, if one observes
closely, the three fields used in the register \(P\) are useful only when more than one
compare-and-swap registers need to be implemented. Otherwise, we can use a single counter replacing
both \(c\) and \(\mt{seq}\).

One issue with the presented algorithm is that it uses unbounded sequence numbers. Thus, the algorithm only
works if the size of the registers is at least logarithmic in the total number of compare-and-swap
operations executed. Actually, the growth in sequence numbers can be much slower as out of the two
unbounded counter types, one of them counts the total number of compare-and-swap operations executed
per process and the other one counts the total number of successful compare-and-swap operations
only. Also, as a first step towards understanding the power of a set of weak instructions with respect to
their ability to efficiently simulate compare-and-swap, we did not
focus on bounding the  sequence numbers.

Using our result, one can transform any \(O(T)\) time algorithm that uses compare-and-swap and
read-write registers into an \(O(T)\) time algorithm that uses reasonably large registers and
support the instructions half-max, max-write, read and write.  As the
transformation is wait-free, it even works for algorithms that are not wait-free. But, is it also
true that any \(O(T)\) time algorithm using registers that support half-max, max-write, read and
write instructions can be transformed into an \(O(T)\) time algorithm using compare-and-swap and
read-write registers? There is an \(\Omega(\log n)\) lower bound \cite{jayanti:timeComplexityLower}
on information aggregation among \(n\) processes which applies to compare-and-swap and read-write
registers but not to registers that support half-max, max-write, read and write instructions. Thus,
it may be possible that there are tasks that take \(o(\log n)\) time using max-write, half-max, read
and write registers but \(\Omega(\log n)\) time using compare-and-swap and read-write registers.

There are other practical factors too that can affect efficiency. For example, the half-max and the
max-write operations are associative and do not return a value. Thus, they can be easier to combine
in the processor memory interconnect when there is contention for the same memory location.  In this
paper however, we only show that a set of weak instructions can be theoretically at least as good as
compare-and-swap with respect to time complexity. Although this highlights the power of a set of
weak instructions, it also opens up the question that what is the best set of synchronization
instructions in general.

\bibliography{cas}

\begin{thebibliography}{10}

\bibitem{chandra:waitfreeVsTresilient}
Tushar Chandra, Vassos Hadzilacos, Prasad Jayanti, and Sam Toueg.
\newblock {Wait-freedom vs. t-resiliency and the robustness of wait-free
  hierarchies (extended abstract)}.
\newblock In {\em {13th Annual ACM Symposium on Principles of Distributed
  Computing (PODC), Los Angeles, California}}, Aug 1994.

\bibitem{ellen:complexityBasedHierarchy}
Faith Ellen, Rati Gelashvili, Nir Shavit, and Leqi Zhu.
\newblock {A Complexity-Based Hierarchy for Multiprocessor Synchronization:
  [Extended Abstract]}.
\newblock In {\em {Proceedings of the 2016 ACM Symposium on Principles of
  Distributed Computing (PODC), Chicago, IL, USA}}, Jul 2016.

\bibitem{ellen:optimalFetchAndIncrement}
Faith Ellen and Philipp Woelfel.
\newblock {An Optimal Implementation of Fetch-and-Increment}.
\newblock In {\em {27th International Symposium on Distributed Computing
  (DISC), Jerusalem, Israel}}, Oct 2013.

\bibitem{gelashvili:towardsReducedInstructionSets}
Rati Gelashvili, Idit Keidar, Alexander Spiegelman, and Roger Wattenhofer.
\newblock {Brief Announcement: Towards Reduced Instruction Sets for
  Synchronization}.
\newblock In {\em {31st International Symposium on Distributed Computing
  (DISC), Vienna, Austria}}, Oct 2017.

\bibitem{golab:constantRMRImplementationOfCAS}
Wojciech Golab, Vassos Hadzilacos, Danny Hendler, and Philipp Woelfel.
\newblock {Constant-RMR Implementations of CAS and Other Synchronization
  Primitives Using Read and Write Operations}.
\newblock In {\em {26th Annual ACM SIGACT-SIGOPS Symposium on Principles of
  Distributed Computing (PODC), Portland, Oregon}}, Aug 2007.

\bibitem{herlihy:waitfreeSynchronization}
Maurice Herlihy.
\newblock {Wait-free Synchronization}.
\newblock {\em {ACM Transactions on Programming Languages and Systems
  (TOPLAS)}}, 1991.

\bibitem{herlihy:artOfMultiprocessorProgramming}
Maurice Herlihy and Nir Shavit.
\newblock {\em {The Art of Multiprocessor Programming}}.
\newblock {Morgan Kaufmann}, 2008.

\bibitem{jayanti:onRobustness}
Prasad Jayanti.
\newblock {On the robustness of Herlihy's hierarchy}.
\newblock In {\em {12th Annual ACM Symposium on Principles of Distributed
  Computing (PODC), Ithaca, New York}}, Aug 1993.

\bibitem{jayanti:timeComplexityLower}
Prasad Jayanti.
\newblock {A Time Complexity Lower Bound for Randomized Implementations of Some
  Shared Objects}.
\newblock In {\em {17th Annual ACM Symposium on Principles of Distributed
  Computing (PODC), Puerto Vallarta, Mexico}}, Jun 1998.

\bibitem{jayanti:logQueue}
Prasad Jayanti and Srdjan Petrovic.
\newblock {Logarithmic-Time Single Deleter, Multiple Inserter Wait-Free Queues
  and Stacks}.
\newblock In {\em {25th International Conference on Foundations of Software
  Technology and Theoretical Computer Science (FSTTCS), Hyderabad, India}}, Dec
  2005.

\bibitem{jayanti:practicalWaitfreeLLSC}
Prasad Jayanti and Srdjan Petrovic.
\newblock {Efficient and Practical Constructions of LL/SC Variables}.
\newblock In {\em {22nd Annual Symposium on Principles of Distributed Computing
  (PODC), Boston, Massachusetts}}, 2003 Jul.

\bibitem{khanchandani:fastSharedCounting}
Pankaj Khanchandani and Roger Wattenhofer.
\newblock {Brief Announcement: Fast Shared Counting using O(n) Compare-and-Swap
  Registers}.
\newblock In {\em {ACM Symposium on Principles of Distributed Computing (PODC),
  Washington, DC, USA}}, Jul 2017.

\bibitem{khanchandani:lowConsensusSynchronization}
Pankaj Khanchandani and Roger Wattenhofer.
\newblock {On the Importance of Synchronization Primitives with Low Consensus
  Numbers}.
\newblock In {\em {19th International Conference on Distributed Computing and
  Networking (ICDCN), Varanasi, India}}, Jan 2018.

\bibitem{michael:waitfreeLLSC}
Maged~M. Michael.
\newblock {Practical Lock-Free and Wait-Free LL/SC/VL Implementations Using
  64-Bit CAS}.
\newblock In {\em {18th International Symposium on Distributed Computing
  (DISC), Amsterdam, Netherlands}}, Oct 2004.

\bibitem{ruppert:determiningConsensusNumbers}
Eric Ruppert.
\newblock {Determining Consensus Numbers}.
\newblock In {\em {16th Annual ACM Symposium on Principles of Distributed
  Computing (PODC), Santa Barbara, California}}, Aug 1997.

\end{thebibliography}

\end{document}